\DeclareMathOperator*{\argmin}{arg\,min}
\newcommand{\hf}{{\frac 12}}
\newcommand{\cN}{{\cal N}}
\renewcommand{\div}{\nabla\cdot\,}
\newcommand{\grad}{\ensuremath{\nabla}}
\newcommand{\bfA}{{\bf A}}
\newcommand{\bfD}{{\bf D}}
\newcommand{\bfH}{{\bf H}}
\newcommand{\bfI}{{\bf I}}
\newcommand{\bfJ}{{\bf J}}
\newcommand{\bfP}{{\bf P}}
\newcommand{\bfe}{{\bf e}}
\newcommand{\bff}{{\bf f}}
\newcommand{\bfx}{{\bf x}}
\newcommand{\bfq}{{\bf q}}
\newcommand{\bfp}{{\bf p}}
\newcommand{\bfd}{{\bf d}}
\newcommand{\bfm}{{\bf m}}
\newcommand{\bftau}{{\boldsymbol \tau}}
\newcommand{\bfdelta}{{\boldsymbol \delta}}
\newcommand{\known}{{\tt\textit{known}}}
\newcommand{\front}{{\tt\textit{front}}}
\newcommand{\unknown}{{\tt\textit{unknown}}}
\newcommand{\obs}{{\sf {_{obs}}}}
\newcommand{\diag}{{\sf{diag}}\,}
\newtheorem{lemma}{Lemma}
\newtheorem{corollary}{Corollary}
\newtheorem{proof}{Proof}
\newcommand{\mulcol}[3]{\multicolumn{#1}{#2}{#3}}
\begin{document}

\begin{frontmatter}

\title{A fast marching algorithm for the factored eikonal equation}
%
\author[mymainaddress]{Eran Treister\corref{mycorrespondingauthor}}
\cortext[mycorrespondingauthor]{Corresponding author}
\ead{erant@bgu.ac.il}
\author[mymainaddress,mysecondaryaddress]{Eldad Haber}
\ead{haber@math.ubc.ca}
\address[mymainaddress]{Department of Earth and Ocean Sciences, The University of British Columbia, Vancouver, BC, Canada.}
\address[mysecondaryaddress]{Department of Mathematics, The University of British Columbia, Vancouver, BC, Canada.}
%

\begin{abstract}
The eikonal equation is instrumental in many applications in several fields ranging from computer vision to geoscience. This equation can be efficiently solved using the iterative Fast Sweeping (FS) methods and the direct Fast Marching (FM) methods. However, when used for a point source, the original eikonal equation is known to yield inaccurate numerical solutions, because of a singularity at the source. In this case, the factored eikonal equation is often preferred, and is known to yield a more accurate numerical solution. One application that requires the solution of the eikonal equation for point sources is travel time tomography. This inverse problem may be formulated using the eikonal equation as a forward problem. While this problem has been solved using FS in the past, the more recent choice for applying it involves FM methods because of the efficiency in which sensitivities can be obtained using them. However, while several FS methods are available for solving the factored equation, the FM method is available only for the original eikonal equation.

In this paper we develop a Fast Marching algorithm for the factored eikonal equation, using both first and second order finite-difference schemes. Our algorithm follows the same lines as the original FM algorithm and requires the same computational effort. In addition, we show how to obtain sensitivities using this FM method and apply travel time tomography, formulated as an inverse factored eikonal equation. Numerical results in two and three dimensions show that our algorithm solves the factored eikonal equation efficiently, and demonstrate the achieved accuracy for computing the travel time. We also demonstrate a recovery of a 2D and 3D heterogeneous medium by travel time tomography using the eikonal equation for forward modeling and inversion by Gauss-Newton.
\end{abstract}

\begin{keyword}
eikonal equation \sep Factored eikonal equation \sep Fast Marching \sep First arrival \sep Travel Time Tomography \sep Gauss Newton \sep Seismic imaging.
\end{keyword}
\end{frontmatter}

\section{Introduction}

The eikonal equation appears in many fields, ranging from computer vision \cite{sethian1996fast,sethian1999fast,spira2004efficient,kimmel1998computing}, where it is used to track evolution of interfaces, to geoscience \cite{luo2012higher,leung2007eulerian,qian2002adaptive,li2013first,rawlinson2004wave} where it describes the propagation of the first arrival of a wave in a medium. The equation has the form
\begin{equation}\label{eq:eikonal}
|\nabla\tau|^2 = \kappa(\vec{x})^2,
\end{equation}
where $|\cdot|$ is the Euclidean norm. In the case of wave propagation, $\tau$ is the travel time of the wave and $\kappa(\vec{x})$ is the slowness (inverse velocity) of the medium. The value of $\tau$ is usually given at some sub-region. For example, in this work we assume
the wave propagates from a point source at location $\vec{x}_0$, for which the travel time is 0, and hence $\tau(\vec{x}_{0}) = 0$.

Equation \eqref{eq:eikonal} is nonlinear, and as such may have multiple branches in its solution. One of these branches, which is the one of interest in the applications mentioned earlier, corresponds to the "first-arrival" viscosity solution, and can be calculated efficiently \cite{crandall1983viscosity,rouy1992viscosity}. One of the ways to compute it is by using the Fast Marching (FM) methods \cite{tsitsiklis1995efficient,sethian1996fast,sethian1999fast}, which solve it directly using first or second order schemes in ${\cal O}(n\log n)$ operations. These methods are based on the monotonicity of the solution along the characteristics. Alternatively, \eqref{eq:eikonal} can be solved iteratively by Fast Sweeping (FS) methods, which may be seen as Gauss-Seidel method for \eqref{eq:eikonal}. First order accurate solutions of \eqref{eq:eikonal} can be obtained very efficiently in $2^d$ Gauss-Seidel sweeps in ${\cal O}(n)$ operations, where $d$ is the dimension of the problem \cite{tsai2003fast, zhao2005fast}. An alternative for the mentioned approaches is to use FS to solve a Lax-Friedrichs approximation for \eqref{eq:eikonal}, which involves adding artificial viscosity to the original equation \cite{kao2004lax}. This approach was suggested for general Hamilton-Jacobi equations, and is simple to implement. In \cite{qian2007fast}, such Lax-Friedrichs approximation is obtained using FS up to third order accuracy using the weighted essentially non-oscillatory (WENO) approximations to the derivatives. For a performance comparison between some of the mentioned solvers see \cite{gremaud2006computational}.

In some cases, the eikonal equation \eqref{eq:eikonal} is used to get a geometrical-optics ansatz of the solution of the Helmholtz equation in the high frequency regime \cite{leung2007eulerian,luo2011factored,luo2012higher,luo2014fast}. This is done using the Rytov decomposition of the Helmholtz solution: $u(\vec{x}) = a(\vec{x})\exp(i\omega\tau(\vec{x}))$, where $a(\vec{x})$ is the amplitude and $\tau(\vec{x})$ is the travel time. This approach involves solving \eqref{eq:eikonal} for the travel-time and solving the transport equation
\begin{equation}\label{eq:transport}
\nabla\tau\cdot\nabla a + \frac{1}{2}a\Delta\tau = \frac{1}{2}(\nabla\tau\cdot\nabla a + \div (a\grad\tau)) = 0
\end{equation}
for the amplitude \cite{luo2011factored,luo2012higher}. The resulting approximation includes only the first arrival information of the wave propagation. Somewhat similarly, the work of \cite{haber2011fast} suggests using the eikonal solution to get a multigrid preconditioner for solving linear systems arising from discretization of the Helmholtz equation. 

In many cases in seismic imaging, the eikonal equation is used for modeling the migration of seismic waves from a point source at some point $\vec{x}_0$. In this case, when solving \eqref{eq:eikonal} numerically by standard finite differences methods, the obtained solution has a strong singularity at the location of the source, which leads to large numerical errors \cite{qian2002adaptive,fomel2009fast}. Figure \ref{fig:badSource} illustrates this phenomenon by showing the approximation error for the gradient of the distance function, which is the solution of \eqref{eq:eikonal} for $\kappa=1$. It is clear that the largest approximation error for the gradient is located around the source point, and that its magnitude is rather large. In addition, it is observed that although $\tau$ may have more singularities in other places, the singularity at the source is more damaging and polluting for the numerical solution \cite{qian2002adaptive,fomel2009fast}.

\begin{figure}
   \centering
   \includegraphics[width=0.4\textwidth]{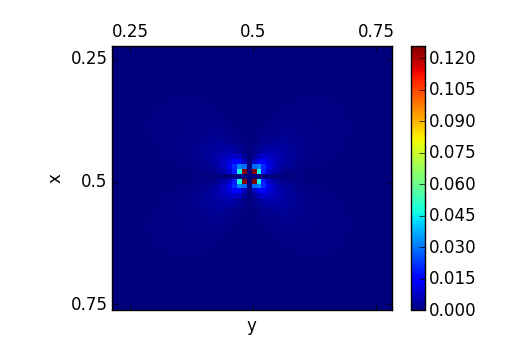}
   \caption{The $l_2$ norm of the approximation error $|\nabla\tau_0 - D\tau_0|$ around a source point at [0.5,0.5], where $\tau_0$ is the distance function and $D$ is a central difference gradient operator on a mesh with $h_x = h_y=0.01$.}
    \label{fig:badSource}
\end{figure}

A rather easy treatment to the described phenomenon is suggested in \cite{pica1997fast,fomel2009fast}, and achieved using a factored version of \eqref{eq:eikonal}. That is, we define
\begin{equation}\label{eq:factoredTau}
\tau = \tau_0\tau_1,
\end{equation}
where $\tau_0$ is the distance function, $\tau_{0} = \|\vec{x}-\vec{x}_0\|_2$, from the point source---the analytical solution for \eqref{eq:eikonal} in the case where $\kappa(\vec{x})=1$ is a constant. Indeed, at the location of the source, the function $\tau_0$ is non-smooth. However, the computed factor $\tau_1$ is expected to be very smooth at the surrounding of the source, and can be approximated up to high order of accuracy \cite{LouQianBurridge2014}. Plugging \eqref{eq:factoredTau} into \eqref{eq:eikonal} and applying the chain rule yields the \emph{factored eikonal equation}
\begin{equation}\label{eq:factoredeikonal1}
\tau_0^2|\nabla\tau_1|^2 + 2\tau_0\tau_1\nabla\tau_0\cdot\nabla\tau_1 + \tau_1^2 = \kappa(\vec{x})^2.
\end{equation}
Similarly to it original version, Equation \eqref{eq:factoredeikonal1} can be solved by fast sweeping methods in first order accuracy \cite{fomel2009fast,luo2012fast,LouQianBurridge2014}, or by a Lax-Friedrichs scheme up to third order of accuracy \cite{luo2012higher,luo2011factored,LouQianBurridge2014}. The recent works \cite{LouQianBurridge2014,noble2014accurate} suggest hybrid schemes where the factored eikonal is solved at the neighborhood of the source and the standard eikonal, which is computationally easier, is solved in the rest of the domain.

One geophysical tool that fits the scenario described earlier is travel time tomography. One way to formulate it is by using the eikonal equation as a forward problem inside an inverse problem \cite{sei1994gradient}. To solve the inverse problem, one should be able to solve \eqref{eq:eikonal} accurately for a point source, and to compute its sensitivities efficiently. The works of \cite{leung2006adjoint,taillandier2009first} computes the tomography by FS, and require an FS iterative solution for computing the sensitivities. The more recent \cite{li2013first} uses the FM algorithm for forward modeling using the non-factored eikonal equation, because this way the sensitivities are obtained more efficiently by a simple solution of a lower triangular linear system. \cite{benaichouche2015first} suggests to use FS for forward modeling using the factored equation, but also efficiently obtain the sensitivities by approximating them using FM with the non-factored eikonal equation.

In this paper, we develop a Fast Marching algorithm for the factored eikonal equation \eqref{eq:factoredeikonal1}, based on \cite{sethian1996fast,sethian1999fast}. As in \cite{sethian1999fast}, our algorithm is able to solve \eqref{eq:factoredeikonal1} using first order or second order schemes, in guaranteed ${\cal O}(n\log n)$ running time. When using our method for forward modeling in travel time tomography, one achieves both worlds: (1) have an accurate forward modeling based on the factored eikonal equation, and (2) obtain the sensitivities of the (factored) forward modeling efficiently, by solving lower triangular linear systems in ${\cal O}(n)$ operations.
Computationally, this is one of the most attractive ways to solve the inverse problem, since the cost of the inverse problem can be governed by the cost of applying the sensitivities.

\bigskip

Our paper is organized as follows. In the next section we briefly review the FM method in \cite{sethian1996fast,sethian1999fast}, including some of its implementation details. Next, we show our extension to the FM method for the factored eikonal equation---in both first and second order of accuracy---and provide some theoretical properties. Following that, we discuss the derivation of sensitivities using FM and briefly present the travel time tomography problem. Last, we show some numerical results that demonstrate the effectiveness of the
method in two and three dimensions, for both the forward and inverse problems.

\section{The Fast Marching algorithm}

We now review the FM algorithm of \cite{sethian1996fast,sethian1999fast} in two dimensions. The extension to higher dimensions is straightforward. The FM algorithm is based on the Godunov upwind discretization \cite{rouy1992viscosity} of \eqref{eq:eikonal}. In two dimensions, this discretization is given by
\begin{equation}\label{eq:Godunov1}
|\nabla\tau|^2 \approx \left[\max\{D^{-x}_{ij}\tau,-D^{+x}_{ij}\tau,0\}^2 + \max\{D^{-y}_{ij}\tau,-D^{+y}_{ij}\tau,0\}^2\right]=\kappa(\vec{x}_{ij})^2,\quad \vec{x}_{ij}\in\Omega_h,
\end{equation}
where in the simplest form $D^{-x}_{ij}\tau = \frac{\tau_{i,j}-\tau_{i-1,j}}{h}$ and $D^{+x}_{ij}\tau = \frac{\tau_{i+1,j}-\tau_{i,j}}{h}$ are the backward and forward first derivative operators, respectively.
In principal, one can replace these operators with ones of higher order of accuracy.

The FM algorithm solves \eqref{eq:Godunov1} in a sophisticated way, exploiting the fact that the upwind difference structure of \eqref{eq:Godunov1} imposes a unique direction in which the information propagates---from smaller values of $\tau$ to larger values. Hence, the FM algorithm rests on solving \eqref{eq:Godunov1} by building the solution outwards from the smallest $\tau$ value. It assumes that some initial value of $\tau$ is given at some region of $\Omega_h$ (or a point $\vec{x}_0$) and it propagates outwards from this initial region, by updating the next smallest value of $\tau$ at each step.

To apply the rule above, let us define three disjoint sets of variables: the \known variables, the \front variables (which are sometimes called the \emph{trial} variables) and the \unknown variables. These three sets together contain all the grid points in the problem. For simplicity, let us assume that we solve \eqref{eq:Godunov1} for a point source. That is, a source is located at point $\vec{x}_0$, for which $\tau(\vec{x}_0) = 0$. Initially, \known is chosen as an empty set, \front is set to contain only $\vec{x}_0$, and \unknown has the rest of the variables for which $\tau$ is set to infinity. At each step we choose the point $\vec{x}_{ij}$ in \front with minimal value of $\tau(\vec{x}_{ij})$ and move it to \known. Next, we move all of its neighbors which are in \unknown to \front, and solve \eqref{eq:Godunov1} for all neighbors which are not in \known. This way, we set all variables to be in \known in $n$ steps, and the algorithm finishes. A precise description of the algorithm is given in Algorithm \ref{alg:FM}.

\begin{algorithm}
\DontPrintSemicolon
\label{alg:FM}
\KwSty{Algorithm: Fast Marching }\;
Initialize: \;
$\tau_{ij} = \infty$ for all $\vec{x}_{ij}\in\Omega_h$, $\tau(\vec{x}_0) = 0$,\;
\known$\leftarrow\emptyset$, \front$\leftarrow\{\vec{x}_0\}$.\;
\While{\front$\neq\emptyset$}{
\begin{enumerate}
    \item \label{step:find_minimum}Find the minimal entry in \front:\;
    $\quad\quad \vec{x}_{i_{min},j_{min}} = \argmin_{\vec{x}_{ij}}\{\tau_{ij}:\vec{x}_{ij}\in$\front$\}$\;
       \item \label{step:update_sets}Add $\vec{x}_{i_{min},j_{min}}$ to \known and take it out of \front:\;
    $\quad\quad$\front$\leftarrow$\front $\;\setminus\{\vec{x}_{i_{min},j_{min}}\}\;;$ \known$\leftarrow$\known $\;\cup\{\vec{x}_{i_{min},j_{min}}\} $.\;
    \item \label{step:get_neighborhood}Add the unknown neighborhood of $\vec{x}_{i_{min},j_{min}}$ to \front:\;
    $\quad\quad\cN_{min} = \{\vec{x}_{i_{min}-1,j_{min}},\vec{x}_{i_{min}+1,j_{min}},\vec{x}_{i_{min},j_{min}-1},\vec{x}_{i_{min},j_{min}+1}\}\setminus$\known\;
    $\quad\quad$\front$\leftarrow$\front $\;\cup\; \cN_{min}$.\;
    \item \label{step:iterate_neighbors}\textbf{Foreach }$\vec{x}_{ij}\in\cN_{min}$\;
    $\quad\quad$Update $\tau_{ij}$ by solving the quadratic \eqref{eq:Godunov1}, using only entries in \known. \;
    \textbf{End}

    \end{enumerate}
}
\end{algorithm}

In \cite{sethian1996fast} it was proved that Algorithm \ref{alg:FM} produces a viable viscosity solution to \eqref{eq:Godunov1} when using first order approximations for the first derivatives. Furthermore, it is proved that the values of $\tau$ in the order of which the points are set to \known in Step \ref{step:update_sets} are monotonically increasing.

\subsection{Efficient implementation using minimum heap}

Algorithm \eqref{alg:FM} has two main computational bottlenecks in Steps \ref{step:find_minimum} and \ref{step:iterate_neighbors}, which are repeated $n$ times. For a $d$-dimensional problem, the set \front contains a $d$-1 dimensional manifold of points, of size ${\cal O}(n^{\frac{d-1}{d}})$. To find the minimum of \front efficiently, a minimum heap data structure is used \cite{sethian1996fast,sethian1999fast}.
A minimum heap is a binary tree with a property that a value at any node is less than or equal to the values at its two children. Consequently, the root of the tree holds the minimal value. The simplest implementation of such a tree is done by a sequential array of nodes, using the rule that if a node is located at entry $k$, then its two children are located at entries $2k$ and $2k+1$ (the first element of the array is indexed by $1$, and is the root of the tree). Equivalently, the parent of a node at entry $k$ is located at entry $\lfloor k/2\rfloor$.
Figure \ref{fig:heap} shows an example of a minimum heap and its implementation using array. Generally, each element in the array can hold many properties, and one of these has to be defined as a comparable ``key'', which is used in the heap for sorting. In our case, each node holds a point $\vec{x}$ in the mesh, and its value $\tau(\vec{x})$ as a key.

\begin{figure}
   \centering
   \includegraphics[width=0.4\textwidth]{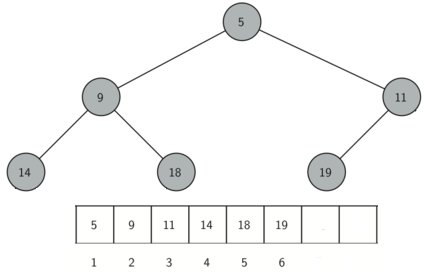}
   \caption{A minimum heap and its implementation using array.}
    \label{fig:heap}
\end{figure}

In its simplest form, the minimum heap structure supports two basic operations: {\tt insert(element,key)} and {\tt getMin()}. For example, this is the case in the C++ standard library implementation of the minimum heap structure. To apply {\tt getMin()}, we remove the first element in the array, and take the last element in the array and push it in the first place. Then, to maintain the property of the heap, we repeatedly replace this value with its smaller child (the smaller of the two) until it reaches down in the tree to the point where it is smaller than its two children or it has no children. The {\tt insert(element,key)} operation first places a new element at the next empty space of the array. Then, it propagates this element upwards, each time replacing it with its parent until it reaches a point where the element is either the root or its key is larger than its parent's key.
Both of the described operations are performed in ${\cal O}(\log m)$ complexity where $m$ is the number of elements in the heap.

In Algorithm \ref{alg:FM}, the set \front is implemented using a min-heap. Steps \ref{step:find_minimum}-\ref{step:update_sets} are trivially implemented using {\tt getMin()}, and {\tt insert(element,key)} is used in Step \ref{step:get_neighborhood}. However, Step \ref{step:iterate_neighbors} of Algorithm \ref{alg:FM} requires updating values which are inside the heap but are not at the root. This operation is not supported in the standard definition of either priority queue or minimum heap. Indeed, the papers \cite{sethian1996fast,sethian1999fast} use a variant of a priority queue, which includes back-links from points $\vec{x}_{ij}$ to their corresponding locations inside the heap, and a more ``software-engineering friendly'' implementation of this idea is suggested in \cite{IMM2001-0841}, where those back-links are incorporated within the heap implementation, without any relation to the mesh. However, although this way an update of a key inside the heap can still be implemented in ${\cal O}(\log m)$, it requires a specialized implementation of the heap, and encumbers the operations described earlier to maintain these back-links. In our implementation, we bypass this need for back-links, and implement Step \ref{step:iterate_neighbors} by reinserting elements to the heap if they are indeed smaller than their value in the heap. In Step \ref{step:find_minimum}, we simply ignore entries which are in \known already. If the algorithm is indeed monotone, like the first order version in \cite{sethian1996fast}, this implementation detail will not change the result of the algorithm. The downside of this change is that it enables \front to grow more than in the back-linked version. However, even if it grows four times compared to the back-linked version, then the heap tree is just two nodes higher, making the difference in running time insignificant.

\subsection{Second order Fast Marching}

Solving the eikonal equation based on a first order discretization in \eqref{eq:Godunov1} provides guaranteed monotonicity and stability.  However, it also provides a less accurate solution because of the added viscosity that is associated with the first order approximation. To get a more accurate FM method, \cite{sethian1999fast} suggests to use a second order upwind approximation in \eqref{eq:Godunov1}, e.g.
\begin{equation}\label{eq:second_order}
D_{i}^{-}\tau = \frac{3\tau_{i} - 4\tau_{i-1} + \tau_{i-2}}{2h},\quad D_{i}^{+}\tau = \frac{-3\tau_{i} + 4\tau_{i+1} - \tau_{i+2}}{2h}.
\end{equation}

However, in some cases, the scheme may revert to first order approximations from certain directions. The obvious case for that is when there are not enough \known points for the high order stencil. This case occurs for example when the given initial region contains only one point. Another condition for using second order operators is given in \cite{sethian1999fast}:
\begin{equation}\label{eq:condition2nd}
\tau_{i-1} \geq \tau_{i-2} \mbox{ or } \tau_{i+1} > \tau_{i+2},
\end{equation}
where the left condition is used for backward operators and the right one for forward operators. If \eqref{eq:condition2nd} is not satisfied, the algorithm reverts to first order operators. Later we show that this condition guarantees the monotonicity of the non-factored FM solution using second order scheme.



\section{Fast Marching for the factored eikonal equation}

Let us rewrite the factored eikonal equation \eqref{eq:factoredeikonal1} in a squared form, which is closer \eqref{eq:eikonal}:
\begin{equation}\label{eq:factoredeikonal2}
|\tau_0\nabla\tau_1 + \tau_1\nabla\tau_0|^2 = \kappa(\vec{x})^2.
\end{equation}
This writing is the key for deriving the FM algorithm for \eqref{eq:factoredeikonal1}. Similarly to the Godunov upwind scheme in \eqref{eq:Godunov1}, we discretize \eqref{eq:factoredeikonal2} for $\tau_1$ using a derivative operator $\hat{D}$ instead of $D$
\begin{equation}\label{eq:Godunov2}
 \left[\max\{\hat{D}^{-x}_{ij}\tau_1,-\hat{D}^{+x}_{ij}\tau_1,0\}^2 + \max\{\hat{D}^{-y}_{ij}\tau_1,-\hat{D}^{+y}_{ij}\tau_1,0\}^2\right]=\kappa(\vec{x}_{ij})^2,\quad \vec{x}_{ij}\in\Omega_h.
\end{equation}
For example, the backward first order factored derivative operator is given by
\begin{equation}\label{eq:FOfactoredD}
\hat{D}^{-x}_{ij}\tau_1 = (\tau_0)_{ij}\frac{(\tau_1)_{i,j}-(\tau_1)_{i-1,j}}{h} + (p_0)_{ij}(\tau_1)_{ij},
\end{equation}
where $\tau_0$ and $p_0 = \frac{\partial\tau_0}{\partial x}$ are known. From this point we apply the Algorithm \ref{alg:FM} as it is. We hold the values of $\tau_0\tau_1$ in \front, and in Step \ref{step:iterate_neighbors} we update $(\tau_1)_{ij}$ with the solution of \eqref{eq:Godunov2}.

\textbf{Initialization:}
For the non factored equation, Algorithm \ref{alg:FM} is initialized by $\tau(\vec{x}_0) = 0$ at the point source. In the factored equation, this is trivially fulfilled by definition, because at the source $\tau_0(\vec{x}_0) = 0$. Still, $\tau_1(\vec{x}_0)$ should not be chosen arbitrarily since its value is used in the finite difference approximations when evaluating its neighbors. Examining \eqref{eq:factoredeikonal2} at the source yields $\tau_1(\vec{x}_0)^2|\nabla\tau_0|^2 = \tau_1(\vec{x}_0)^2=\kappa(\vec{x}_0)^2$, since we choose $\tau_0$ such that $|\nabla\tau_0|^2 = 1$, independently of $\kappa$. In some cases in the literature, i.e., \cite{fomel2009fast}, the value $\kappa(\vec{x}_0)$ is absorbed in $\tau_0$, such that $|\nabla\tau_0|^2 = \kappa(\vec{x}_0)^2$. Then $\tau_1(\vec{x}_0)$ should be chosen as 1. This is obviously equivalent for computing $\tau$, however, it is much more convenient to choose $\tau_0$ independently of $\kappa$ if one wants to obtain the sensitivities of the FM algorithm (for more details, see Section \ref{sec:sensitivities}).

\textbf{Second order discretization:} Similarly to the non-factored equation, the second order upwind approximations \eqref{eq:second_order} can be used in \eqref{eq:Godunov2}-\eqref{eq:FOfactoredD} for $\tau_1$. Again, we revert to the first order approximation in cases where the additional point needed for the second order approximation is not in \known. We note that unlike the non-factored case, the solution $\tau_1$ is in most cases very smooth at the source (expected to be close to constant or linear). So, when we initialize the algorithm with the value of $\tau_1$ at the point source and revert to a first order approximation for the neighbors, we do not introduce large discretization errors. In the non-factored case, the second derivative of $\tau$ is singular at the source, so using first order approximation there significantly pollutes the rest of the solution.

\subsection{Solution of the piecewise quadratic equation}
We now describe how to solve both the non-factored and factored piecewise quadratic equations \eqref{eq:Godunov1} and \eqref{eq:Godunov2} respectively. This is required in Step \ref{step:iterate_neighbors} of Algorithm \ref{alg:FM}. Solving such an equation consists of the following four steps:
\begin{enumerate}
\item \label{step:order}
Determine the order of approximation for each derivative in \eqref{eq:Godunov1}/\eqref{eq:Godunov2} (Only required for high order schemes).
\item\label{step:direction}
Determine which directions to choose (backward or forward) for each dimension ($x,y$ or $z$).
\item\label{step:solve_all}
Solve the quadratic equation in \eqref{eq:Godunov1}/\eqref{eq:Godunov2}, assuming all terms are positive.
\item\label{step:validity}
Make sure that the solution is valid, such that all $\max$ terms in \eqref{eq:Godunov1}/\eqref{eq:Godunov2} are indeed held with positive values. If not, some terms should be dropped, and the quadratic problem with the remaining terms is solved again.
\end{enumerate}
Let us first consider solving the non-factored first order \eqref{eq:Godunov1}, for which the Step \ref{step:order} is not relevant. In this case, Step \ref{step:direction} is simple: for each $\max\{D^{-}_{ij}\tau,-D^{+}_{ij}\tau,0\}$ term, the smaller of the two values of $\tau$ from both sides (forward or backward) is guaranteed to give a higher finite difference derivative. Furthermore, in Step \ref{step:validity}, if some of the terms turn out negative after Step \ref{step:solve_all}, then we can drop terms from \eqref{eq:Godunov1} in decreasing order of the $\tau$ values, until a valid solution is reached. The same is true for a first order factored version in \eqref{eq:Godunov2}.

However, using second order schemes (selectively) imposes additional complications on the solution of the piecewise quadratic equations \eqref{eq:Godunov1} and \eqref{eq:Godunov2}. There are many options for order of accuracy vs directions in Steps \ref{step:order}-\ref{step:direction}, and in addition it is not clear in which order to drop terms in Step \ref{step:validity} if negative terms are detected. Obviously, one can check all possibilities, but such an option may be costly in high dimensions. To simplify this we follow \cite{sethian1999fast}, and in Step \ref{step:order} we use the second order approximation if the extra point is available in \known and fulfils the condition \eqref{eq:condition2nd}, and revert to first order approximation if not. Then, in Step \eqref{step:direction} we determine the choice of directions considering the non-factored first order approximation \eqref{eq:Godunov1}. That is, if $(\tau_0\tau_1)_{i-1} < (\tau_0\tau_1)_{i+1}$, then we choose the backward upwind direction; otherwise we choose the forward direction. That is done correspondingly for each dimension.

Once Steps \ref{step:order}-\ref{step:direction} are done, \eqref{eq:Godunov2} reduces to a piecewise quadratic equation of the form
\begin{equation}\label{eq:piecewiseQuad}
\sum_{k}{\max\{\alpha_k(\tau_{ij} - \beta_k),0\}}^2 = \kappa(\vec{x}_{ij})^2,
\end{equation}
where $\alpha_k \geq 0$, $\beta_k \geq 0$ are non-negative constants that are coming from the finite difference approximations. For example, assuming that $k=1$ corresponds to the $x$ coordinate, then \eqref{eq:FOfactoredD} would correspond to $\alpha_1 = \frac{(\tau_0)_{i,j}}{h} + (p_0)_{i,j}$ and $\beta_1 = \frac{(\tau_0)_{i,j}(\tau_1)_{i-1,j}}{h\alpha_1}$. In Step \ref{step:solve_all} we simply ignore the $\max\{\cdot,0\}$ function and solve the equation assuming all terms are positive. We solve a simple quadratic function and choose the larger one of its two solutions for $\tau_{ij}$. If all chosen derivative terms are positive, the solution is valid; otherwise, we reduce the terms in \eqref{eq:piecewiseQuad} in decreasing order of $\beta_k$, each time solving \eqref{eq:piecewiseQuad} with the remaining terms until a valid solution is reached. In three dimensions for example, this involves at most three quadratic solves. Algorithm \ref{alg:peicewiseQuad} summarizes the solution of the piecewise quadratic equation.

\begin{algorithm}
\DontPrintSemicolon
\label{alg:peicewiseQuad}
\KwSty{Algorithm: Solution of the piecewise quadratic equation}\;
\For{each dimension x,y,...}{
\textit{\% Choosing direction, forward or backward.}\;
\uIf {both forward and backward neighboring points are in \known} {Choose the direction with smaller neighboring $\tau$.\;}
\Else{Otherwise, choose the direction in \known.}
\textit{\% Choosing order of approximation, 1st or 2nd.}\;
\uIf {next neighboring point is in \known}{Use second order approximation.\;}
\Else{Use first order approximation.}
}
\textit{\% Now all coefficients of $\alpha_k$ and $\beta_k$ of Equation \eqref{eq:piecewiseQuad} are known.}\;
Calculate $(\tau_1)_{i,j}$ by solving Equation \eqref{eq:piecewiseQuad}.\;
\While{the solution $(\tau_1)_{i,j}$ is not valid}{
    Remove the term with largest $\beta_k$ from the remaining terms in \eqref{eq:piecewiseQuad}.\;
    Calculate $(\tau_1)_{i,j}$ by solving \eqref{eq:piecewiseQuad} with the remaining terms.\;

}

\end{algorithm}

\subsection{The monotonicity of the obtained solution}
It is known that the solution of \eqref{eq:eikonal} is monotone in the direction of the characteristics. We now show how to enforce the monotonicity of our solution using the FM method for the factored eikonal equation. To set the stage, we first consider the FM method for the original non-factored equation.

In \cite{sethian1996fast} the non-factored first order discretization \eqref{eq:Godunov1} is considered. In this case, each newly calculated value $\tau_{ij}$ is guaranteed to be larger than its \known neighbors at the time of the calculation. To show this clearly, consider for example that the backward derivative is chosen in the $x$ direction. Then,
\begin{equation}\label{eq:mono_non_factored1}
\tau_{i-1,j} = \tau_{i,j} - h\frac{\tau_{i,j} - \tau_{i-1,j}}{h} = \tau_{i,j} - hD^{-x}_{i,j}\tau,
\end{equation}
so since $D^{-x}_{i,j}\tau \geq 0$ in the solution of \eqref{eq:Godunov1}, we have that $\tau_{i,j} \geq \tau_{i-1,j}$. This means that $\tau_{i,j}$ is greater or equal to its \known neighbors. This property insures the monotonicity of the solution. The proof for this appears in \cite{sethian1996fast}, but here we can simplify it because unlike \cite{sethian1996fast}, we only calculate entries using \known values.
We state the following lemma:

\begin{lemma} \label{lemma:monotonicity}
Let $\tau$ be the result of Algorithm \ref{alg:FM}, for solving the (non-factored) first order equation \eqref{eq:Godunov1}. Then the values of $\tau$ are monotonically non-decreasing in the order in which they are set to \known.
\end{lemma}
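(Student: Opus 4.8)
The plan is to argue by induction on the order in which points enter \known. Denote by $p_1, p_2, \ldots$ the grid points in the sequence they are moved to \known in Step \ref{step:update_sets}, and let $F_k$ be the contents of \front at the instant $p_k$ is selected, so that $p_k = \argmin\{\tau(q) : q \in F_k\}$. The inductive hypothesis is that $\tau(p_1) \le \tau(p_2) \le \cdots \le \tau(p_k)$, with the base case $p_1 = \vec{x}_0$, $\tau(p_1)=0$, being immediate. To close the induction it suffices to establish $\tau(p_{k+1}) \ge \tau(p_k)$, and since $p_{k+1}$ is the minimizer over $F_{k+1}$, it is enough to prove that \emph{every} $q \in F_{k+1}$ satisfies $\tau(q) \ge \tau(p_k)$.

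First I would record two facts. The Godunov property \eqref{eq:mono_non_factored1}: whenever \eqref{eq:Godunov1} is solved at a point $q$, the resulting value is $\ge$ the value of every active (upwind) \known neighbor entering the stencil, since a positive $\max$ term means the corresponding upwind difference is positive. Second, the monotonicity of the Godunov solution in its data: adding one more \known neighbor to the stencil at $q$ can only leave $\tau(q)$ unchanged or decrease it, because any newly active term contributes a non-negative quantity to the left-hand side of \eqref{eq:Godunov1} for a fixed candidate value, forcing the solution downward.

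Next I would split $F_{k+1}$ according to whether a point was affected by the insertion of $p_k$ in Step \ref{step:iterate_neighbors}. If $q \in F_{k+1}$ was already present in $F_k$ and its value was not lowered when $p_k$ was moved to \known, then $\tau(q)$ equals its $F_k$ value, which is $\ge \tau(p_k)$ because $p_k$ minimized $F_k$. Otherwise $q$ is either a former \unknown neighbor of $p_k$ receiving a finite value for the first time, or a point of $F_k$ whose value was strictly lowered; in both situations the only newly available \known neighbor is $p_k$ itself, so any change in the Godunov solution at $q$ must arise from $p_k$ entering the active stencil, and the Godunov property then gives $\tau(q) \ge \tau(p_k)$ directly. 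Combining the two cases yields $\min_{q \in F_{k+1}} \tau(q) \ge \tau(p_k)$, hence $\tau(p_{k+1}) \ge \tau(p_k)$, completing the induction.

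The step I expect to be the main obstacle is the dichotomy in the previous paragraph, specifically justifying that a strict decrease of $\tau(q)$ forces $p_k$ to be an active upwind neighbor rather than some previously known neighbor. This rests on the observation, guaranteed by the inductive hypothesis, that $p_k$ carries the largest value among all current \known points, so it can never displace a smaller, already-known neighbor in the upwind selection of Step \ref{step:direction}; the only mechanism by which $p_k$ can alter $\tau(q)$ is by supplying an upwind direction that was previously absent. Once this is pinned down, \eqref{eq:mono_non_factored1} closes the argument, and the boundary situation in which the recomputed value coincides with the old one collapses harmlessly to the equality $\tau(q)=\tau(p_k)$ through the minimality of $p_k$ in $F_k$.
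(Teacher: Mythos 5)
Your proof is correct, and it reaches the lemma by a genuinely different decomposition than the paper's. The paper argues by contradiction: it takes the \emph{earliest} violation $\tau(\vec{x}_p) > \tau(\vec{x}_k)$ with $p<k$, traces $\vec{x}_k$ back to the iteration $\bar{k}$ at which its value was last updated and it entered \front{}, and chains the inequalities $\tau(\vec{x}_p) \le \tau(\vec{x}_{\bar{k}}) \le \tau(\vec{x}_k)$ --- the first because $(p,\bar k)$ cannot be an earlier violation (after noting that $\vec{x}_p$ must already be in \known{} at iteration $\bar{k}$, by the minimality of Step \ref{step:find_minimum}), the second from the upwind property \eqref{eq:mono_non_factored1}. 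You instead run a forward induction and establish a stronger invariant: at the moment $p_k$ is accepted, every point of \front{} has value at least $\tau(p_k)$, proved by splitting \front{} according to whether a point's value was or was not affected by the acceptance of $p_k$. Both arguments rest on exactly the same two ingredients --- minimality of the heap selection and the Godunov property \eqref{eq:mono_non_factored1} --- so the mathematical content is equivalent; the difference is the scaffolding. What your route buys: the front invariant is the reusable statement (it is what one needs to prove full correctness of FM, not only monotonicity of acceptance values), and your explicit justification that a strict decrease of $\tau(q)$ forces $p_k$ to be \emph{active} in the stencil --- since $p_k$, carrying the largest value among \known{} points by the inductive hypothesis, can never displace an existing upwind neighbor in the direction selection --- fills in a step that the paper leaves implicit when it asserts $\tau(\vec{x}_{\bar{k}}) \le \tau(\vec{x}_k)$ (that assertion needs $\vec{x}_{\bar k}$ to actually enter the stencil of $\vec{x}_k$, which is exactly your displacement argument plus uniqueness of the Godunov solution). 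What the paper's route buys: it avoids the case analysis over \front{} entirely and is correspondingly shorter.
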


\begin{proof}
Denote by $\vec{x}_k$ an element that is set to \known at Step \ref{step:update_sets} of the $k$-th iteration of Algorithm \ref{alg:FM}. Assume by contradiction that there exists two elements $\vec{x}_p$ and $\vec{x}_k$, such that $\tau(\vec{x}_p) > \tau(\vec{x}_k)$ and $p < k$. Without loss of generality, assume that $k$ is the earliest iteration that this condition is fulfilled. Let $\bar{k}<k$ be the iteration in which the value of $\tau(\vec{x}_k)$ is updated in the last time and it is entered to \front. We know that $\vec{x}_{\bar{k}}$ is a neighbor of $\vec{x}_{k}$. By the algorithm, we know that at the $\bar{k}$-th iteration $\vec{x}_p$ is already set to \known, otherwise $\vec{x}_k$ would have been chosen to \known at the $p$-th iteration instead of $\vec{x}_p$. By the assumption, we know that $ \tau(\vec{x}_p)\leq \tau(\vec{x}_{\bar{k}})$, because otherwise $\vec{x}_k$ would not have been the earliest element to violate the monotonicity. By the property in \eqref{eq:mono_non_factored1}, we know that $\tau(\vec{x}_{\bar{k}}) \leq \tau(\vec{x}_{k})$, and hence we reach $ \tau(\vec{x}_p)\leq \tau(\vec{x}_{k})$, which contradicts our assumption.
\end{proof}

Furthermore, the lemma above can be extended for a Fast Marching solution of \emph{any} equation \eqref{eq:Godunov1} such that the discretization operator $D$ satisfies a monotonicity condition:
\begin{equation}\label{eq:mono_discretization}
D_{ij}^{-x}\tau \geq 0 \Rightarrow \tau_{ij} \geq \tau_{i-1,j} \mbox{ and } -D_{ij}^{+x}\tau \geq 0 \Rightarrow \tau_{ij} \geq \tau_{i+1,j}.
\end{equation}
\noindent The next corollary can be proved using the same arguments as in Lemma \ref{lemma:monotonicity}:
\begin{corollary} \label{corr:monotonicity}
Let $\tau$ be the result of Algorithm \ref{alg:FM}, for solving the Godunov upwind equation \eqref{eq:Godunov1} using operators $D$ which satisfy \eqref{eq:mono_discretization}. Then the values of $\tau$ are monotonically non-decreasing in the order in which they are set to \known.
\end{corollary}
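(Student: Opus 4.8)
The plan is to re-run the contradiction argument of Lemma \ref{lemma:monotonicity} essentially verbatim and to observe that the only place where the specific first order structure entered was the inequality $\tau(\vec{x}_{\bar{k}}) \le \tau(\vec{x}_k)$, which the lemma derived from \eqref{eq:mono_non_factored1}. The combinatorial skeleton of that proof --- choosing the earliest violating pair $\vec{x}_p,\vec{x}_k$ with $\tau(\vec{x}_p) > \tau(\vec{x}_k)$ and $p<k$, identifying the last-update iteration $\bar{k}<k$, noting that $\vec{x}_{\bar{k}}$ is a neighbor of $\vec{x}_k$, and deducing that $\vec{x}_p$ is already in \known at iteration $\bar{k}$ --- depends solely on the bookkeeping of Algorithm \ref{alg:FM} (Steps \ref{step:update_sets} and \ref{step:iterate_neighbors}) and not on the discretization operator. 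Hence all of those statements carry over unchanged, and the entire task reduces to re-establishing the single inequality $\tau(\vec{x}_{\bar{k}}) \le \tau(\vec{x}_k)$ from the abstract hypothesis \eqref{eq:mono_discretization} in place of \eqref{eq:mono_non_factored1}.

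To re-establish that inequality I would argue as follows. At iteration $\bar{k}$ the value $\tau(\vec{x}_k)$ attains its final value by solving the Godunov equation \eqref{eq:Godunov1} at $\vec{x}_k$ using its \known neighbors, and this final update is triggered by $\vec{x}_{\bar{k}}$ entering \known. Suppose, say, that $\vec{x}_{\bar{k}}$ is the backward $x$-neighbor $\vec{x}_{i-1,j}$ of $\vec{x}_k=\vec{x}_{ij}$ (the other three cases are symmetric). Then $\vec{x}_{\bar{k}}$ enters the solution through the term $\max\{D^{-x}_{ij}\tau,-D^{+x}_{ij}\tau,0\}$ with the backward choice active, so that $D^{-x}_{ij}\tau \ge 0$ in the computed solution. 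The monotonicity hypothesis \eqref{eq:mono_discretization} then gives exactly $\tau_{ij} \ge \tau_{i-1,j}$, i.e. $\tau(\vec{x}_k) \ge \tau(\vec{x}_{\bar{k}})$. Combining this with $\tau(\vec{x}_p) \le \tau(\vec{x}_{\bar{k}})$, which holds by minimality of $k$ exactly as in the lemma, yields $\tau(\vec{x}_p) \le \tau(\vec{x}_k)$, contradicting $\tau(\vec{x}_p) > \tau(\vec{x}_k)$ and closing the argument.

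The step I expect to be the main obstacle is verifying that $\vec{x}_{\bar{k}}$ really does enter the final Godunov solve as an \emph{active} upwind term, so that the premise $D^{-x}_{ij}\tau \ge 0$ (or its forward counterpart) of \eqref{eq:mono_discretization} is genuinely satisfied rather than vacuous. One must rule out the possibility that the contribution of $\vec{x}_{\bar{k}}$ is discarded in the validity pruning (Step \ref{step:validity}, cf. Algorithm \ref{alg:peicewiseQuad}) while still being recorded as the ``last update''. The clean way to handle this is to define $\bar{k}$ as the last iteration at which $\tau(\vec{x}_k)$ strictly changes value: since adding a \known neighbor can only lower the Godunov solution, a strict change forces the newly added neighbor $\vec{x}_{\bar{k}}$ to survive the pruning as an active term, which is precisely what makes the hypothesis of \eqref{eq:mono_discretization} applicable. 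With this definition in place the inequality $\tau(\vec{x}_{\bar{k}}) \le \tau(\vec{x}_k)$ follows as above, and the remainder of the proof is identical to that of Lemma \ref{lemma:monotonicity}.
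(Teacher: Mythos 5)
Your proposal is correct and takes essentially the same route as the paper: the paper proves Corollary \ref{corr:monotonicity} by simply re-running the contradiction argument of Lemma \ref{lemma:monotonicity} with the abstract hypothesis \eqref{eq:mono_discretization} standing in for the first-order property \eqref{eq:mono_non_factored1}, which is exactly your plan. Your extra step of defining $\bar{k}$ as the last iteration at which $\tau(\vec{x}_k)$ strictly changes, so that the triggering neighbor is guaranteed to survive the pruning as an active upwind term and the premise of \eqref{eq:mono_discretization} is genuinely satisfied, tightens a point the paper leaves implicit rather than changing the argument.
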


The condition \eqref{eq:mono_discretization} and the corollary above is violated when the second order operators \eqref{eq:second_order} are used in \eqref{eq:Godunov1}. However, if we look at a single violation, then it is of order $h^2$. To show this, we examine the backward difference derivative using Taylor expansion:
\begin{equation}\label{eq:mono_non_factored2}
\tau_{i-1,j} = \tau_{i,j} - h\left(\frac{\partial\tau}{\partial x}\right)_{i,j} + O(h^2) = \tau_{i,j} - hD^{-x}_{i,j}\tau + O(h^2),
\end{equation}
where $D^{-x}_{i,j}$ is given in \eqref{eq:second_order} (the same arguments can be derived for the forward difference derivative). Assuming again that $D^{-x}_{i,j}\tau>0$, this means that each newly calculated $\tau_{i,j}$ is generally greater than its \known neighbors, but may violate that up to magnitude $O(h^2)$. Note that if $D^{-x}_{i,j}\tau$ is sufficiently bounded away from zero and the second derivative $\frac{\partial^2\tau}{\partial x^2}$ is bounded in $[x_{i-1},x_i]$, then $\tau_{i,j} > \tau_{i-1,j}$ will be satisfied.

To correct this and obtain a monotone solution using \eqref{eq:second_order}, one may impose the condition \eqref{eq:condition2nd} for using the second order scheme. If it is not satisfied, the scheme reverts to the first order scheme, which satisfies \eqref{eq:mono_discretization}. If \eqref{eq:condition2nd} is satisfied, then \eqref{eq:second_order} does satisfy \eqref{eq:mono_discretization}, because
\begin{equation}
2hD^{-x}_{ij}\tau = 3\tau_{ij} - 4\tau_{i-1,j} + \tau_{i-2,j} < 3\tau_{ij} - 3\tau_{i-1,j}.
\end{equation}
Note that the condition \eqref{eq:condition2nd} is suggested in \cite{sethian1999fast} but the monotonicity guarantee of the second order scheme is not examined.

We now examine the monotonicity of the obtained factored solution $\tau_0\tau_1$ when using first order operators in \eqref{eq:Godunov2}. Suppose that we are calculating $(\tau_1)_{ij}$ using the backward operator \eqref{eq:FOfactoredD} in the $x$ direction in \eqref{eq:Godunov2}. We again start with a Taylor expansion
\begin{equation}
\begin{array}{rcl}
(\tau_0\tau_1)_{i-1,j} &=& \left((\tau_0)_{ij} - h\left(\frac{\partial\tau_0}{\partial x}\right)_{i,j} + O(h^2)\right)\left((\tau_1)_{ij} - h\left(\frac{\partial\tau_1}{\partial x}\right)_{i,j} + O(h^2)\right) \\
&=& (\tau_0\tau_1)_{ij} - h(\tau_0)_{ij}\left(\frac{\partial\tau_1}{\partial x}\right)_{i,j} - h(\tau_1)_{ij}\left(\frac{\partial\tau_0}{\partial x}\right)_{i,j} + O(h^2)\\
                         &=& (\tau_0\tau_1)_{ij} - h\hat{D}^{-x}_{ij}\tau_1 + O(h^2),
\end{array}
\end{equation}\label{eq:Taylor1}
\noindent where the last equality is obtained by placing $\left(\frac{\partial\tau_1}{\partial x}\right)_{i,j} = \frac{(\tau_1)_{i,j}-(\tau_1)_{i-1,j}}{h} + O(h)$.
This expansion shows that if the monotonicity is not obtained, i.e., $(\tau_0\tau_1)_{i,j} - (\tau_0\tau_1)_{i-1,j}<0$, then the non-factored derivative is negative, $D^{-x}_{ij}(\tau_0\tau_1) < 0$, while the factored derivative is non-negative $\hat{D}^{-x}_{ij}\tau_1 \geq 0$ (otherwise it is not chosen in \eqref{eq:Godunov2}).  This means that the monotonicity may be violated only up to an error of $O(h^2)$. This holds for both first and second order upwind approximations. In fact, \eqref{eq:Taylor1} shows that this is a result of using the chain rule rather than the order of discretization of the operators $\hat D$, since the monotonicity condition involves the value $(\tau_0)_{i-1,j}$, while it does not appear in the discretization scheme. In any case, the magnitude of the error in the monotonicity violation is either of the same or of higher order as the error in $\tau_1$, using first or second order schemes. Again, if $D^{-x}_{i,j}\tau_1$ is sufficiently bounded away from zero and $\frac{\partial^2\tau_1}{\partial x^2}$ is bounded in $[x_{i-1},x_i]$, then the monotonicity $(\tau_0\tau_1)_{i,j} > (\tau_0\tau_1)_{i-1,j}$ will be satisfied.

Nevertheless, in our algorithm we may enforce the monotonicity of the obtained solution by reverting to the non-factored operators in cases where the monotonicity is not satisfied, or, the factored and non-factored schemes do not agree in sign, for example: $\hat{D}^{-x}_{ij}\tau_1 \geq 0$, but $D^{-x}_{ij}(\tau_0\tau_1) < 0$. Note that in this case the numerical derivative is approximately zero, hence the direction of the characteristic is almost parallel to the $y$ direction. We apply this change using the same order of derivative which the algorithm chooses to use. That is, if the algorithm chooses a first or second order factored stencil, we revert to a standard first or second order stencil, respectively. Following Corollary \ref{corr:monotonicity}, this guarantees the monotonicity of the solution, because we enforce the condition \eqref{eq:mono_discretization} at all stages of the algorithm. We note that experimentally, this small correction does not influence the accuracy of the solution obtained with our algorithm in both first and second order schemes in two and three dimensions.

\section{Calculation of sensitivities and travel time tomography}\label{sec:sensitivities}
Travel time tomography is a useful tool in some Geophysical applications. One way to obtain it is by using the eikonal equation as a forward problem inside an inverse problem \cite{sei1994gradient}. To solve the inverse problem, one should be able to solve \eqref{eq:eikonal} accurately, and to compute its sensitivities. The works of \cite{leung2006adjoint,taillandier2009first} computes the tomography by FS, and require an FS iterative solution for computing the sensitivities. When using the FM algorithm for forward modeling, those are obtained more efficiently by a simple solution of a lower triangular linear system \cite{li2013first,benaichouche2015first}. More explicitly, let us denote by boldface all the discretized values of the mentioned functions on a grid, and suppose that we set $\bfm$ to be the vector of the values of $\kappa(\vec{x})^2$ on this grid. By solving \eqref{eq:Godunov2}, we get a function $\bftau_1(\bfm)$ for the values of $\tau_1$ on the grid. We wish to get a linearization for $\bftau_1(\bfm)$, such that we can predict its change following a small change in $\bfm$. That is, we wish to be able to apply an approximation
\begin{equation}
\bftau_1(\bfm+\bfdelta\bfm) \approx \bftau_1(\bfm) + \bfJ\bfdelta\bfm,
\label{eq:Sensitivity}
\end{equation}
where $\bfJ$ is the sensitivity matrix (or Jacobian) defined by
\begin{equation}
\bfJ_{ij} = (\grad_{\bfm} \bftau_1)_{ij} =  {\frac {\partial (\bftau_1)_{i}}{\partial \bfm_{j}}}.
\end{equation}

To obtain the sensitivity we first rewrite \eqref{eq:Godunov2} in implicit form
\begin{equation}\label{eq:Godunov2algebraic}
\bff(\bfm,\bftau_1) = (\hat{\bfD}^{x}\bftau_1)^2 + (\hat{\bfD}^{y}\bftau_1)^2 - \bfm = 0,
\end{equation}
where $\hat{\bfD}^{x} = \diag(\bftau_0)\bfD^{x} + \diag(\bfp_0)$ and $\hat{\bfD}^{y} = \diag(\bftau_0)\cdot\bfD^{y} + \diag(\bfq_0)$ are the matrices that apply the finite difference derivatives that are chosen by the FM algorithm when applied for $\bfm$. $\bfp_0$ and $\bfq_0$ are the analytical derivatives of $\tau_0$ with respect to $x$ and $y$ on the grid respectively, and $\diag(\bfx)$ denotes a diagonal matrix whose diagonal elements are those of the vector $\bfx$. We note that in the points where no derivative is chosen in the solution of \eqref{eq:Godunov2}, a zero row is set in the corresponding operator $\hat{\bfD}$. Also, at the row of the point source, we set each of $\hat\bfD^{x}$ and $\hat\bfD^{y}$ to have only one diagonal non-zero element, which equals to the values of $\bfp_0$ and $\bfq_0$ at the source. This way, \eqref{eq:Godunov2algebraic} is exactly fulfilled for $\hat\bfD^{x}$ and $\hat\bfD^{y}$ and $\bftau_1$.

To obtain the sensitivity, we apply the gradient operator to both sides of \eqref{eq:Godunov2algebraic}, yielding  $(\nabla_{\bftau_1}\bff)(\nabla_\bfm\bftau_1) + \nabla_\bfm\bff = 0$, and define \cite{haber2014computational}:
\begin{eqnarray}\label{eq:Sensitivity1}
\bfJ(\bfm) = \grad_{\bfm} \bftau_1 =  -(\grad_{\bftau_1}\bff)^{-1} (\grad_{\bfm} \bff).
\end{eqnarray}
This results in
\begin{eqnarray}\label{eq:Sensitivity2}
\bfJ = (\diag(2\hat{\bfD}^{x}\bftau_1)\hat{\bfD}^{x} + \diag(2\hat{\bfD}^{y}\bftau_1)\hat{\bfD}^{y})^{-1},
\end{eqnarray}
following $\grad_{\bfm} \bff = -\bfI$, and since the operators $\hat\bfD$ do not depend on $\bfm$ (we defined $\tau_0$ and its derivatives so it does not depend on $\kappa$).

The matrix \eqref{eq:Sensitivity2} can be multiplied with any vector efficiently given the order of variables $(i,j)$ in which the FM algorithm set their values as \known. To apply $\bfJ$ on an arbitrary vector $\bfx$, i.e. calculate $\bfe=\bfJ\bfx$, a linear system $\bfA\bfe=\bfx$ can be solved with $\bfA = \bfJ^{-1}$ (note that $\bfA$ is a sparse matrix). The equations of this linear system, which correspond to the rows of $\bfA$, can be approached and solved sequentially in the FM order of variables. Since the FM algorithm uses only \known variables for determining each new variable, then when looking at each row $i$ of $\bfA$, the non-zero entries in that row (except $i$) correspond to variables that where in \known when $\bftau_i$ was determined during the FM run. Therefore, if all those variables are known except $i$, then the $i$-th equation has only one unknown ($e_i$) and can be trivially solved. In other words, if we permute $\bfA$ according to the FM order, we get a sparse lower triangular matrix, and the corresponding system can be solved efficiently in one forward substitution sweep in $O(n)$ operations. For the non-factored equation one may use \eqref{eq:Sensitivity2} with non-factored operators $\bfD^{x}$ and $\bfD^{y}$ instead of the factored ones \cite{li2013first}.

\subsection{Travel time tomography using Gauss-Newton}

Assume that we have several sources and receivers set on an open surface, and for each source we have traveltime data $\bfd_{\obs}^{i}$ given in the location of the receivers. Based on these observations we wish to compute the unknown slowness model of the ground underneath. The inverse problem for this process, called travel time tomography, may be given by
\begin{eqnarray}
\label{eq:inverseEik}
\min_{m_{L}<\bfm<m_{H}}\phi(\bfm) = \min_{m_{L}<\bfm<m_{H}}\left\{\sum_{i=1}^{n_s}{\|\bfP^{\top} \bftau^{i}(\bfm) - \bfd_{\obs}^{i}\|^2} + \alpha R(\bfm)\right\},
\end{eqnarray}
where
\begin{eqnarray}
 \label{eq:eik_i}
 |\grad \bftau^{i}|^{2} = \bfm(\vec x)\quad \bftau^{i}(\vec x_{i}) = 0\quad i = 1,\ldots,n_{s}
\end{eqnarray}
Here $\bftau^{i}$ is the travel time from the point source $\vec x_i$, and $\bfm(\vec x)=\kappa(\vec x)^2$ is the squared slowness model
as in \eqref{eq:eikonal}, only now it is unknown. The operator $\bfP^{\top}$ is a projection to the set of receivers that gather the wave information. Here we assume that the information from all sources is available on all the receivers, i.e., the projection operator $\bfP$ does not change between sources. $R(\bfm)$ is a regularization term and $\alpha>0$ is its balancing parameter. The parameters $m_{L}$ and $m_{H}$ are positive lower and upper bounds needed for keeping the slowness of the medium physical. We note that the observations $\bfd_{\obs}^{i}$ can be obtained manually from recorded seismic data or by automatic time picking---for more information see \cite{saragiotis2013automatic} and references therein.

Without the regularization term $R(\bfm)$, the problem \eqref{eq:inverseEik} is ill-posed, i.e., many solutions $\bfm$ may fit the predicted travel time to the measured data \cite{vogel2002computational,somersalo2004statistical}. For this reason, in most cases we cannot expect to exactly recover the true model, but wish to recover a reasonable model by adding prior information using the regularization term $R(\bfm)$. This term aims to promote physical or meaningful solutions that we may expect to see in the recovered model. For example, in seismic exploration, one may expect to recover a layered model of the earth subsurface, hence may choose $R$ to promote smooth or piecewise-smooth functions like the total variation regularization term \cite{rudin1992nonlinear}.

There are several ways to solve \eqref{eq:inverseEik}, and most of them are gradient-based. Here we focus on Gauss-Newton. This method is computationally favorable here, since its cost is governed by the application of sensitivities, which are easy to obtain using FM. Given an approximation $\bfm^{(k)}$ at the $k$-th iteration, we place \eqref{eq:Sensitivity} into \eqref{eq:inverseEik} and get
\begin{eqnarray}
\label{eq:GNapprox}
\min_{\delta\bfm}\ \hf \sum_{i=1}^{n_{s}} \| \bfP^{\top} \diag(\bftau_0^i)\left(\bftau_1^i(\bfm^{(k)}) + \bfJ^i\delta\bfm\right) -  \bfd_{\obs}^{i} \|^{2} + \alpha R(\bfm^{(k)}+\delta\bfm),
\end{eqnarray}
where $\bfJ^i$ is the sensitivity of $\bftau^i_1$ at $\bfm^{(k)}$.
Minimizing this approximation for $\delta\bfm$ leads to computing the gradient
\begin{eqnarray}
\label{gradpi}
\grad_{\bfm}\phi(\bfm^{(k)}) = \sum_{i=1}^{n_{s}} (\bfJ^i)^{\top}\diag(\bftau_0^i)\bfP\left( \bfP^{\top}\diag(\bftau_0^i) \bftau_1^i -  \bfd_{\obs}^{i}\right) + \alpha \grad_{\bfm} R(\bfm^{(k)}).
\end{eqnarray}
We then approximately solve the linear system
$$ \bfH \delta \bfm = -\grad_{\bfm}\phi(\bfm^{(k)}) $$
where
$$
\bfH =  \sum_{i=1}^{n} (\bfJ^i)^{\top}\diag(\bftau_0^i) \bfP \bfP^{\top} \diag(\bftau_0^i)\bfJ^i + \alpha \Delta_\bfm R(\bfm^{(k)}).
$$
The linear system is solved using the conjugate gradient method where only matrix vector products are computed.
Finally, the model is updated, $\bfm \leftarrow \bfm + \mu \delta \bfm$ where $\mu \le 1$ is a line search parameter that is chosen such that the objective function is decreased at each iteration.

\section{Numerical results: solving the eikonal equation}
In this section we demonstrate the FM algorithm using first or second order upwind discretization for solving the factored eikonal equation \eqref{eq:factoredeikonal1}. We demonstrate both the accuracy of the obtained solution, and the computational cost of calculating it using the FM algorithm. The accuracy of the algorithm is demonstrated by two error norms: one in the maximum norm $l_\infty$, and one is the mean $l_2$ norm defined by the standard $l_2$ norm of the error divided by the square root of the total number of variables. Similarly to \cite{sethian1999fast}, we show these two measures to demonstrate the accuracy of the second order scheme. Showing the $l_\infty$ norm of the error for this scheme may result in only first order accuracy, because at some points our second order FM algorithm reverts to first order operators, which may be picked by the $l_\infty$ norm.

To demonstrate the efficiency of the computation, we measure the time in which the algorithm solves each test. We also show this timing in terms of work-units, where each work unit is defined by the time that it takes to evaluate the equation \eqref{eq:eikonal} using given central difference gradient stencils (without memory allocation time). We note that the more reliable timings appear for the large scale examples.

We use analytical examples for media where there is a known analytical solution for a point source located at $\vec{x}_0$. The first two appear in \cite{fomel2009fast}. We show results for two and three dimensions. Our code is written in Julia language \cite{Julia} version 0.4.5, and all our tests were calculated on a laptop machine using Windows 10 64bit OS, with Intel core-i7 2.8 GHz CPU with 32 GB of RAM. Our code is publicly available in \url{https://github.com/JuliaInv/FactoredEikonalFastMarching.jl}. We do not enforce the monotonicity in the results below, but those can be enforced in our package. The three test cases are listed below.
\paragraph{Test case 1: Constant gradient of squared slowness} In this test case we set:
\begin{equation}
  \kappa^2(\vec{x}) = s_0^2 + 2a \vec{e}_1\cdot (\vec{x}-\vec{x}_0),
\end{equation}
where $\vec{e}_1=(1,0)$ is a unit vector, and $\cdot$ is the standard dot product. The parameters $a$, $s_0$, the domain and the source location are chosen differently in 2D and 3D. The corresponding exact solution is given by
\begin{equation}
\tau_{exact}(\vec{x})= \bar S^2 \sigma - \frac{1}{6}a^2(\sigma^3),
\end{equation}
where
\begin{eqnarray}
  \bar S^2(\vec{x}) &=& s_0^2 + a\vec{e}_1\cdot(\vec{x}-\vec{x}_0)\\
  \sigma^2(\vec{x}) & = & \left(\bar S^2 + \sqrt{\bar S^4 - a^2\|\vec{x}-\vec{x_0}\|^2}\right)^{-1}2\|\vec{x}-\vec{x_0}\|^2.
\end{eqnarray}
Figures \ref{fig:kappa1} and \ref{fig:kappa1C} show the model $\kappa$ for this test case with the chosen parameters for 2D.


\paragraph{Test case 2: Constant gradient of velocity} In this test case we set:
\begin{equation}
\kappa(\vec{x}) = \left(\frac{1}{s_0} + a\vec{e}_1\cdot(\vec{x}-\vec{x}_0) \right)^{-1},
\end{equation}
where again $\vec{e}_1=(1,0)$, $\cdot$ is the dot product, and the parameters $a$, $s_0$, the domain and the source location are chosen differently in 2D and 3D. The exact solution is given by
\begin{equation}
\tau_{exact}(\vec{x})= \frac{1}{a}\mbox{acosh}\left(1+\frac{1}{2}s_0a^2\kappa(\vec{x})\|\vec{x}-\vec{x}_0\|^2\right).
\end{equation}
Figures \ref{fig:kappa2} and \ref{fig:kappa2C} show the model $\kappa$ for this test case with the chosen parameters for 2D.


\paragraph{Test case 3: Gaussian factor} In this test case we choose a function for $\tau_1^{exact}$ and multiply it by $\tau_0$ to get $\tau_{exact}$. We choose $\tau_1$ as a Gaussian function centered around a point $\vec{x_1}$:
\begin{equation}
  \tau_1^{exact}(\vec{x}) = \frac{1}{2}\exp\left(-(\vec{x}-\vec{x}_1)^T\Sigma(\vec{x}-\vec{x}_1)\right) +\frac{1}{2},
\end{equation}
where $\Sigma$ is a $2\times2$ or $3\times3$ positive diagonal matrix. As before, the parameters $\vec{x}_1$ and $\Sigma$, the domain and the source location $\vec{x_0}$ are chosen differently in 2D and 3D. Here $\kappa(\vec{x})$ is defined by \eqref{eq:factoredeikonal2}, with $\tau_0$ being the distance function.
Figures \ref{fig:kappa3} and \ref{fig:kappa3C} show the model $\kappa$ for this test with the chosen parameters for 2D.


\begin{figure}
\begin{center}
  \subfigure[Test case 1: Constant gradient of squared slowness.]{\includegraphics[width=0.32\linewidth]{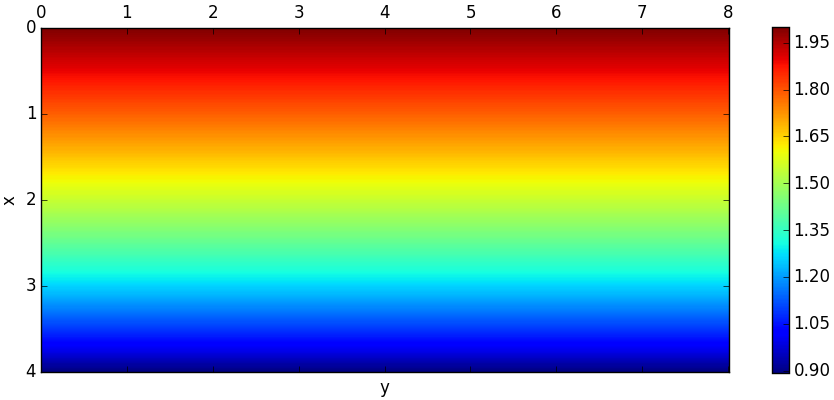}\label{fig:kappa1}}
  \subfigure[Test case 2: Constant gradient of velocity.]{\includegraphics[width=0.32\linewidth]{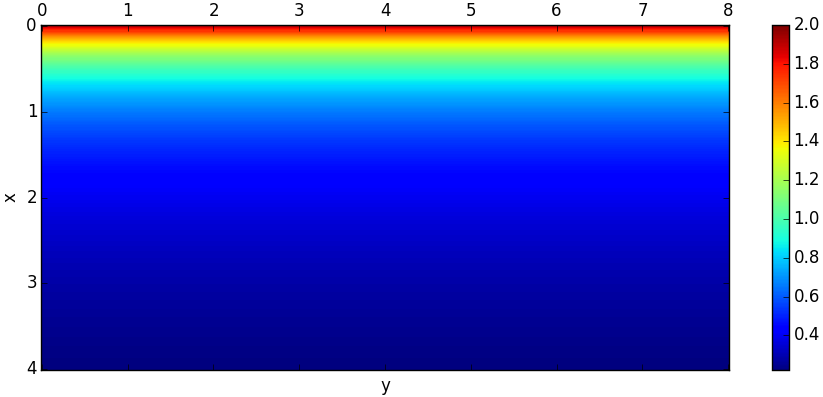}\label{fig:kappa2}}
  \subfigure[Test case 3: Gaussian factor]{\includegraphics[width=0.32\linewidth]{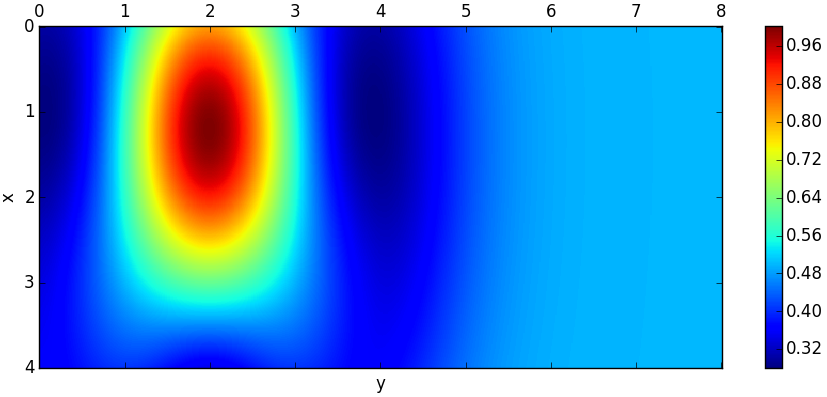}\label{fig:kappa3}}
  \subfigure[Test case 1: Contours of the solution.]{\includegraphics[width=0.32\linewidth]{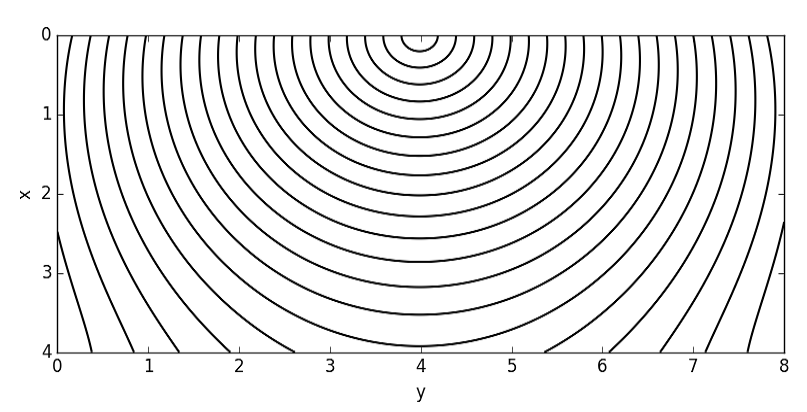}\label{fig:kappa1C}}
  \subfigure[Test case 2: Contours of the solution.]{\includegraphics[width=0.32\linewidth]{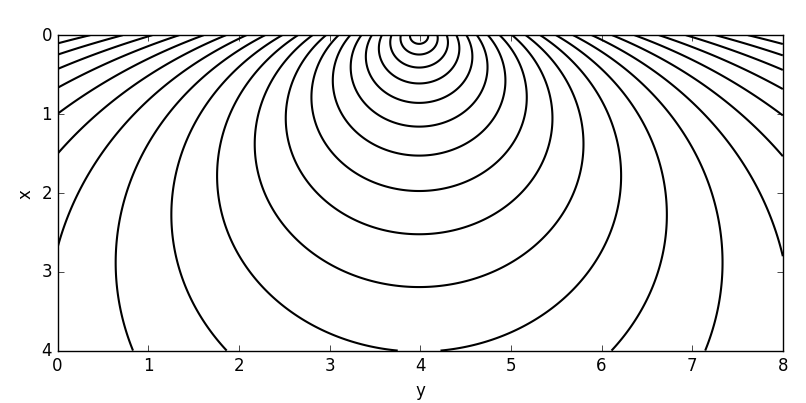}\label{fig:kappa2C}}
  \subfigure[Test case 3: Contours of the solution.]{\includegraphics[width=0.32\linewidth]{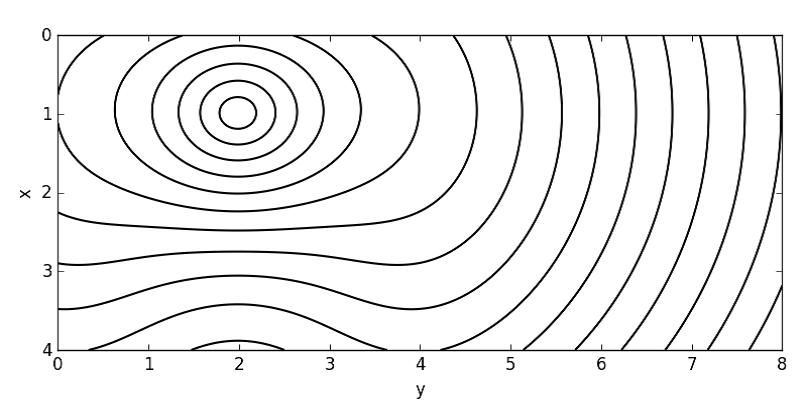}\label{fig:kappa3C}}
\end{center}
\caption{The 2D slowness model $\kappa(\vec{x})$ of the three test cases and the corresponding contours of the 2D solutions.}
\label{fig:TTT}
\end{figure}

\subsection{Two dimensional tests}
Now we show results for the two dimensional versions of the tests mentioned above. For all tests in 2D we choose the domain to be $[0,4] \times [0,8]$, while $h=h_x=h_y$ varies from large to small.

\paragraph{Test case 1: Constant gradient of squared slowness}

For this 2D setting, we use the parameters $a  = -0.4$, $s0 = 2.0$, and the source location is $\vec{x}_0 = (0,4)$.
Table \ref{tab:test1-2D} summarizes the results for this test. On the first order section we see a typical first order convergence rate in both error norms. As the mesh size increases by two in each direction, the errors drop by a factor of two. In the second order section we see the typical behavior of the FM algorithm. At some points, first order operators are used, and hence the error at those locations dominates the $l_\infty$ norm. Still, we observe much better convergence compared to the first order $l_\infty$, only it is not of second order. In the mean $l_2$ norm we see typical second order convergence---as the mesh size increases by two in each direction, the errors drop by a factor of four. In any case, the errors in the second order columns are much smaller than those in the first order columns.

In terms of computational cost, the 2D FM algorithm exhibits favorable timings and work counts. Except the small cases, the cost of the algorithm is comparable to 200-300 function evaluations using standard difference stencils. This is maintained for all the considered mesh sizes. The difference in the computational cost between using first and second order schemes is only about $10\%$ of execution time.

\begin{table}
\centering
\begin{tabular}{|c|c|c|c|c|c|c|c|}
\hline
$h$& $n$ &\mulcol{2}{|c|}{$1^{st}$ order} &  \mulcol{2}{|c|}{$2^{nd}$ order}\\
\hline
 & &error in $\tau$  & time(work) & error in $\tau$ & time(work)\\
\hline
1/40 & $161\times321$ &       [3.71e-03, 9.42e-04] &      0.05s(217) &    [9.33e-05, 9.26e-06] &      0.05s (202)\\
\hline
1/80 & $321\times641$  &     [1.85e-03, 4.69e-04]   &     0.19s (199)  &   [3.30e-05, 2.21e-06]    &     0.20s (209)\\
\hline
1/160 & $641\times1281$&     [9.22e-04, 2.34e-04]   &     0.85s (217)  &   [1.14e-05, 5.32e-07]    &     0.85s (218)\\
\hline
1/320& $1281\times2561$  &    [4.60e-04, 1.17e-04]    &    3.89s (266)   & [4.06e-06, 1.28e-07]     &   3.84s (262)  \\
\hline
1/640 &$2561\times5121$&       [2.30e-04, 5.83e-05]  &     16.4s (278)    &   [1.47e-06, 3.12e-08]    &  17.1s (289)    \\
\hline
1/1280 &$5121\times10241$&       [1.15e-04, 2.92e-05]  &     76.6s (316)    &   [5.18e-07, 7.64e-09]    &  77.5s (320) \\
\hline

\end{tabular}
\caption{Results for 2D constant gradient of squared slowness (test case 1). The error measures are in the $[l_\infty, \mbox{mean } l_2]$ norms.}\label{tab:test1-2D}
\end{table}

\paragraph{Test case 2: Constant gradient of velocity}
For this 2D setting, we use the parameters $a  = 1.0$, $s_0 = 2.0$, and the location of the source is again at $\vec{x}_0 = (0,4)$.
Table \ref{tab:test2-2D} summarizes the results for this test. The results here are almost identical to the previous test case. The first order columns show typical first order convergence in both error norms. The second order columns show better convergence and exhibits second order convergence in the mean $l_2$ norm column.
The computational costs columns show timings which are almost identical to the previous test case.

\begin{table}[ht]
\centering
\begin{tabular}{|c|c|c|c|c|c|c|c|}
\hline
$h$& $n$ &\mulcol{2}{|c|}{$1^{st}$ order} &  \mulcol{2}{|c|}{$2^{nd}$ order}\\
\hline
 & &error in $\tau$  & time(work) & error in $\tau$ & time(work)\\
\hline
1/40 & $161\times321$ &   [2.66e-02, 1.01e-02]  &   0.05s (205)  & [4.86e-04, 2.90e-04]      &      0.05s (236)  \\
\hline
1/80 & $321\times641$  &     [1.32e-02, 5.05e-03]     &   0.21s (221)     &     [1.67e-04, 7.38e-05]   &0.20s (206)   \\
\hline
1/160 & $641\times1281$&     [6.59e-03, 2.52e-03]    &  0.87s (223)     &  [5.18e-05, 1.85e-05]    &  0.86s (221)  \\
\hline
1/320& $1281\times2561$  &      [3.29e-03, 1.26e-03]   &   3.80s (259)  &   [1.90e-05, 4.61e-06]   &  3.88s (265)   \\
\hline
1/640 &$2561\times5121$&       [1.65e-03, 6.28e-04]    &  16.2s (274)    &  [6.58e-06, 1.15e-06]   & 16.6s (280)    \\
\hline
1/1280 &$5121\times10241$&        [8.22e-04, 3.14e-04]  &    73.8s (304)  &    [2.28e-06, 2.86e-07]  & 74.6s (307) \\
\hline
\end{tabular}
\caption{Results for the 2D constant gradient of velocity (test case 2). The error measures are in the $[l_\infty, \mbox{mean } l_2]$ norms.}\label{tab:test2-2D}
\end{table}

\paragraph{Test case 3: Gaussian factor}
For this setting, we use the parameters $\Sigma = \diag(0.1,0.4)$, $\vec{x}_1 = (4/3,2)$ (floored to the closest grid point), and the source is located in the point $\vec{x}_0 = (1,2)$. Table \ref{tab:test3-2D} summarizes the results for this test. Again we see first order convergence at the first order columns in both norms. On the second order columns we again see faster convergence, and in the mean $l_2$ norm column we see convergence rate that is close to second order---the error decreases by a factor of about 3.9 when the mesh size increases by a factor of 2 in each dimension. Again we see similar behavior in the computational cost columns.

\begin{table}[ht]
\centering
\begin{tabular}{|c|c|c|c|c|c|c|c|}
\hline
$h$& $n$ &\mulcol{2}{|c|}{$1^{st}$ order} &  \mulcol{2}{|c|}{$2^{nd}$ order}\\
\hline
 & &error in $\tau$  & time(work) & error in $\tau$ & time(work)\\
\hline
1/40 & $161\times321$ &[6.15e-03, 3.86e-03]       &  0.05s(205) &    [1.60e-04, 5.94e-05]&     0.05s(236)\\
\hline
1/80 & $321\times641$  &     [3.07e-03, 1.93e-03]  &    0.21s (221)  &   [3.85e-05, 1.56e-05]    &     0.20s (206)\\
\hline
1/160 & $641\times1281$&      [1.54e-03, 9.67e-04]  &    0.87s (223)    &   [1.08e-05, 4.03e-06]    &     0.86s (221)\\
\hline
1/320& $1281\times2561$  &   [7.68e-04, 4.83e-04]     &  3.80s (259)  &  [3.18e-06, 1.04e-06]  &   3.88s (265)\\
\hline
1/640 &$2561\times5121$&     [3.84e-04, 2.42e-04]   &    16.2s (275)  &  [9.59e-07, 2.66e-07] &    16.6s (280)\\
\hline
1/1280 &$5121\times10241$&     [1.92e-04, 1.21e-04]       &  73.8s (304)     &   [2.99e-07, 6.88e-08]     &  74.6s (307) \\
\hline
\end{tabular}
\caption{Results for the 2D Gaussian factor test case (test case 3). The error measures are in the $[l_\infty, \mbox{mean } l_2]$ norms.}\label{tab:test3-2D}
\end{table}

We now wish to better illustrate the difference between the accuracy of the first order scheme and the second order scheme. First, Figure \ref{fig:Contours} shows contours of the exact and approximate solutions in certain regions of the domain for the second and third test cases. It is clear that the first order approximation is less accurate than the second order one. Next, Figure \ref{fig:LogLog2D} shows plots of the errors in Tables \ref{tab:test1-2D}-\ref{tab:test3-2D} in logarithmic scales for both $h$ and the error norms, where the order of convergence determines the slope of the lines. It is clear that in all cases, using the second order scheme we get second order convergence in mean $l_2$ norm, and a bit more than first order convergence in the $l_\infty$ norm.

\begin{figure}[ht!]
\begin{center}
  \subfigure[Constant gradient of velocity.]{\includegraphics[width=0.32\linewidth]{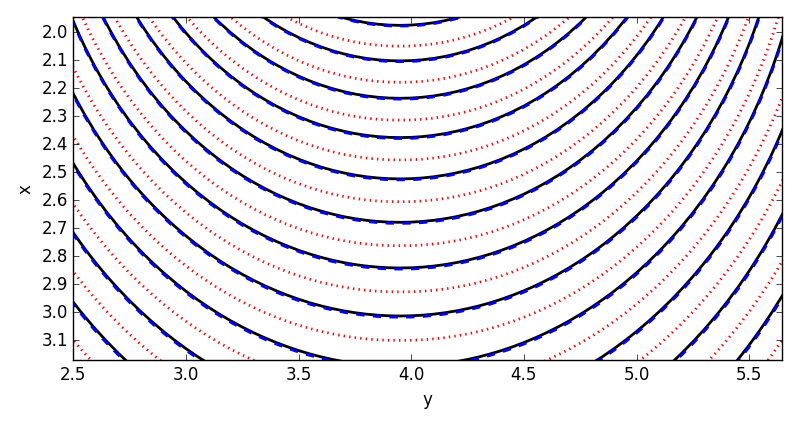}}
  \subfigure[Gaussian factor.]{\includegraphics[width=0.32\linewidth]{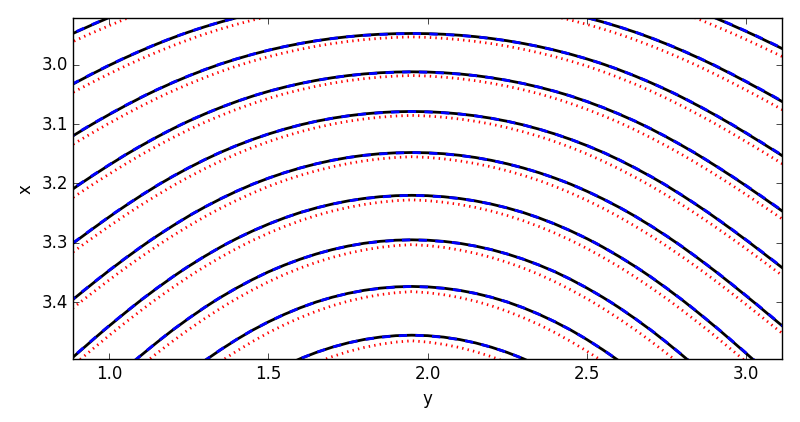}}
\end{center}
\caption{Contours of small regions in the exact, first order accurate and second order accurate travel times $\tau$ using $h=0.1$. The exact solution appears in black line. The first order approximation appears in dotted red line and the second order approximation appears in a dashed blue line mostly right with the exact solution.}
\label{fig:Contours}
\end{figure}

\begin{figure}[ht]
\begin{center}
  \subfigure[2D constant gradient of the squared slowness.]{\includegraphics[width=0.3\linewidth]{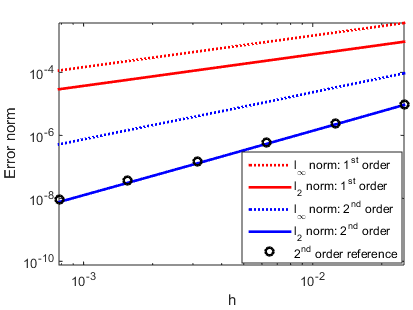}}
  \subfigure[2D constant gradient of velocity.]{\includegraphics[width=0.3\linewidth]{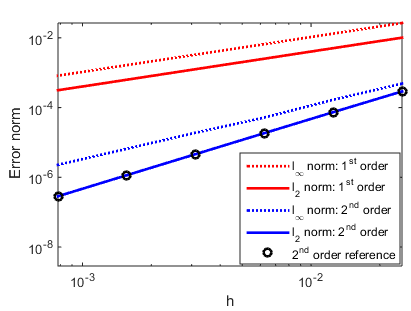}}
  \subfigure[2D Gaussian factor.]{\includegraphics[width=0.3\linewidth]{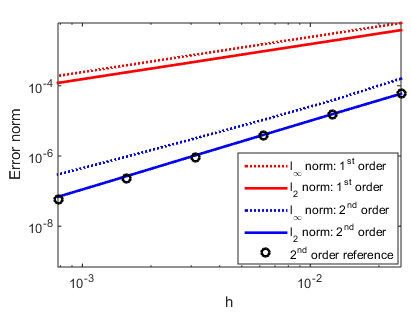}}
\end{center}
\caption{The accuracy of the FM approximations in logarithmic scales for the 2D cases. Red plots are used for first order approximations, blue plots for second order approximations; dotted lines for $l_\infty$ error norm and solid for mean $l_2$ norm. Black circles denote a reference for exact second order convergence rate.}
\label{fig:LogLog2D}
\end{figure}

\subsection{Three dimensional tests}
We now show results for the same type of tests in three dimensions. For all the 3D tests we choose the domain to be $[0,0.8] \times [0,1.6] \times [0,1.6]$, and $h=h_x=h_y=h_z$ varies.
\paragraph{Test case 1: Constant gradient of squared slowness} For the 3D version of this test case we use the parameters $a  = -1.65$, and $s_0 = 2.0$, and the source is located at $(0,0.8,0.8)$. Table \ref{tab:test1-3D} summarizes the results for this test case. Again, like in two dimensions, the first order version of FM yields first order convergence rate in both error norms. When using the second order scheme we get a super-linear convergence rate in the $l_\infty$ column, and second order convergence in the mean $l_2$ column. We note that in 3D the FM algorithm reverts to first order scheme on 2D manifolds, where the derivative in each dimension switches sign, and not on 1D curves as in 2D.

In terms of computational cost, it is obvious that the 3D problem is much more expensive than the 2D one. The computational cost in seconds per grid-point in 3D is about 3 times higher than the corresponding cost in 2D. That is because the treatment of each grid point is more expensive (more neighbors and more derivative directions), and the number of grid points that are processed inside the heap is much larger (a 2D manifold of points compared to a 1D curve). As a result, when we normalize the timing by the cost of a 3D ``work-unit'' (evaluation of \eqref{eq:eikonal} in 3D), the cost grows a little when the mesh-size grows. Still, solving the problem requires 200-500 work units. Again, using the first and second order schemes requires similar computational effort in our 3D implementation of the FM algorithm.

\begin{table}[ht]
\centering
\begin{tabular}{|c|c|c|c|c|c|c|c|}
\hline
$h$& $n$ &\mulcol{2}{|c|}{$1^{st}$ order} &  \mulcol{2}{|c|}{$2^{nd}$ order}\\
\hline
 & &error in $\tau$  & time(work) & error in $\tau$ & time(work)\\
\hline
1/20 & $17\times33\times33$&   [5.41e-03, 1.46e-03]   &     0.04s(236) & [5.63e-4 ,1.49e-04]     & 0.04s(234)  \\
\hline
1/40 & $33\times65\times65$&    [2.64e-03, 7.05e-04]   &      0.30s (230) &  [2.00e-04 ,3.52e-05]  &0.32s (235) \\
\hline
1/80 & $65\times129\times129$&    [1.30e-03, 3.46e-04]  &     2.88s (332)  & [6.99e-05 ,7.82e-06]  & 2.90s (334) \\
\hline
1/160& $129\times257\times257$&   [6.41e-04, 1.72e-04]   &      28.7s (427)&  [2.51e-05 ,1.68e-06] &   29.0s (432) \\
\hline
1/320 &$257\times513\times513$&    [3.19e-04, 8.55e-05]   &     264s (481)& [8.78e-06 ,3.53e-07]   &    272s (497)  \\
\hline
\end{tabular}
\caption{Results for the 3D constant gradient of squared slowness test case (test case 1). The error measures are in the $[l_\infty, \mbox{mean } l_2]$ norms.}\label{tab:test1-3D}
\end{table}

\paragraph{Test case 2: Constant gradient of velocity}
For the 3D version of this test case we use the parameters $a  = 1.0$, and $s_0 = 2.0$, and the source is located at $(0,0.8,0.8)$. Table \ref{tab:test2-3D} summarizes the results for this test case. As in the previous case, we get first order convergence when using the first order scheme, in both error norms. Again, when using the second order scheme we get a super-linear convergence rate in the $l_\infty$ column, and second order convergence in the mean $l_2$ column.
\begin{table}[ht]
\centering
\begin{tabular}{|c|c|c|c|c|c|c|c|}
\hline
$h$& $n$ &\mulcol{2}{|c|}{$1^{st}$ order} &  \mulcol{2}{|c|}{$2^{nd}$ order}\\
\hline
 & &error in $\tau$  & time(work) & error in $\tau$ & time(work)\\
\hline
1/20 & $17\times33\times33$ &  [1.35e-02, 5.04e-03]   &   0.04s(237)  &[2.34e-03, 9.36e-04]   &    0.04s(255)   \\
\hline
1/40 & $33\times65\times65$  & [6.24e-03, 2.44e-03]    &  0.31s (234)  & [5.12e-04, 1.72e-04]    &    0.32s (236) \\
\hline
1/80 & $65\times129\times129$&   [3.00e-03, 1.20e-03]   &   2.86s (330)   &[1.70e-04, 3.82e-05]   &      2.89s (334)    \\
\hline
1/160& $129\times257\times257$  &   [1.47e-03, 5.99e-04] &    27.6s (411)&  [5.42e-05, 9.33e-06]  &     28.9s (430) \\
\hline
1/320 &$257\times513\times513$&  [7.30e-04, 2.99e-04]   &  263s (481)& [1.95e-05, 2.29e-06]      & 271s (496)  \\
\hline
\end{tabular}
\caption{Results for the 3D constant gradient of velocity test case (test case 2). The error measures are in the $[l_\infty, \mbox{mean } l_2]$ norms.}\label{tab:test2-3D}
\end{table}

\paragraph{Test case 3: Gaussian factor}
For this 3D test case we use the parameters $\Sigma = \diag(0.2,0.4,0.1)$, $\vec{x}_1 = (0.4,\frac{1.6}{3},0.4)$ (floored to the closest grid point), and the source is located in the point $\vec{x}_0 = (0.2,0.4,0.4)$. Table \ref{tab:test2-3D} summarizes the results for this test case. The results are similar to the previous test case in both the convergence (first/second order using $l_\infty$/$l_2$ norms) and computational costs in seconds and work units.

\begin{table}[ht]
\centering
\begin{tabular}{|c|c|c|c|c|c|c|c|}
\hline
$h$& $n$ &\mulcol{2}{|c|}{$1^{st}$ order} &  \mulcol{2}{|c|}{$2^{nd}$ order}\\
\hline
 & &error in $\tau$  & time(work) & error in $\tau$ & time(work)\\
\hline
1/20 & $17\times33\times33$ &    [7.53e-03, 3.26e-03] &  0.04s (230) & [3.65e-04, 1.27e-04]     &    0.04s (229)   \\
\hline
1/40 & $33\times65\times65$  &   [3.69e-03, 1.56e-03]  &  0.33s (245)  & [9.95e-05, 2.85e-05]     &  0.34s (253) \\
\hline
1/80 & $65\times129\times129$&    [1.83e-03, 7.62e-04] & 2.77s (319) &  [3.22e-05, 7.50e-06]    &  2.80s (323)  \\
\hline
1/160& $129\times257\times257$  &    [9.11e-04, 3.77e-04]  &  26.4s (393)&  [1.06e-05, 2.06e-06]   &  27.2s (405)\\
\hline
1/320 &$257\times513\times513$&    [4.54e-04, 1.87e-04]  & 267s (487) &[3.54e-06, 5.66e-07]   &  276s (504)  \\
\hline
\end{tabular}
\caption{Results for the 3D Gaussian factor test case (test case 3). The error measures are in the $[l_\infty, \mbox{mean } l_2]$ norms.}\label{tab:test3-3D}
\end{table}

Again we wish to demonstrate the order accuracy of the FM approximations using the first and second order schemes. Figure \ref{fig:LogLog3D} shows the results in Tables \ref{tab:test1-3D}-\ref{tab:test3-3D} in logarithmic scales. Like in 2D, we observe second order convergence rate when the error is measure in mean $l_2$ norm. However, because the second order stencil reduces to first order stencil in two dimensional manifolds, the error in $l_\infty$ norm is higher in 3D than it is in 2D.
\begin{figure}[ht!]
\begin{center}
  \subfigure[3D constant gradient of the squared slowness.]{\includegraphics[width=0.32\linewidth]{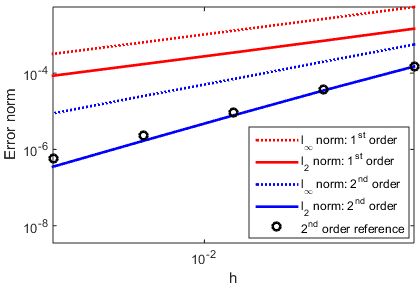}}
  \subfigure[3D constant gradient of velocity.]{\includegraphics[width=0.32\linewidth]{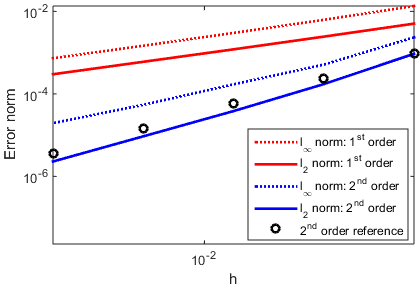}}
  \subfigure[3D Gaussian factor.]{\includegraphics[width=0.32\linewidth]{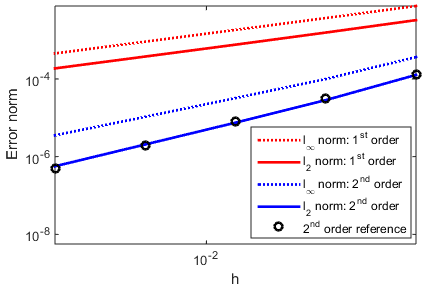}}
\end{center}
\caption{The accuracy of the FM approximations in logarithmic scales for the 3D cases. Red and blue plots are used for first and second order approximations, respectively; dotted lines for $l_\infty$ error norm and solid for mean $l_2$ norm. Black circles denote an exact second order convergence rate.}
\label{fig:LogLog3D}
\end{figure}

\section{Numerical results: travel time tomography.}
In this section we demonstrate a solution of travel time tomography using synthetic travel time data $\bfd_{\obs}$ for a 2D and SEG/EAGE salt model given in \cite{aminzadeh19973} and presented in Figure \ref{fig:Mtrue}, using a $256\times128$ grid that represents an area of approximately $13.5km\times4.2km$. We choose 51 equally distanced sources locations on the open surface (that is, they are located every 5 pixels on the top row), and 256 receivers (located in every pixel on the top row). We note that to have a reasonable solution using the first arrivals for the inverse problem under this setup, the velocity in the interior has to be larger than that on the surface. This is to
guarantee that the first arrival rays obtained on the surface actually come from the interior
but not only travel along the surface. To $\bfd_{\obs}$ we add white Gaussian noise with standard deviation of $0.01\times \mbox{mean}(|\bfd_{\obs}|)$.

\begin{figure}
\begin{center}
  \subfigure[True velocity model.]{\includegraphics[width=0.32\linewidth]{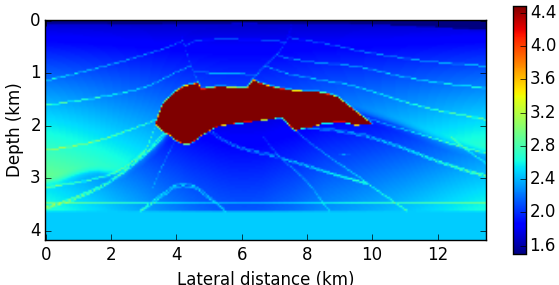}\label{fig:Mtrue}}
  \subfigure[Initial velocity model $\bfm_{ref}$]{\includegraphics[width=0.32\linewidth]{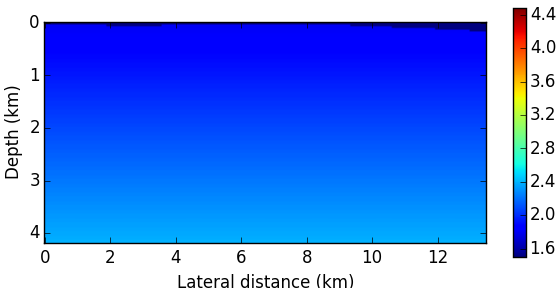}\label{fig:Mref}}
  \subfigure[Recovered velocity model.]{\includegraphics[width=0.32\linewidth]{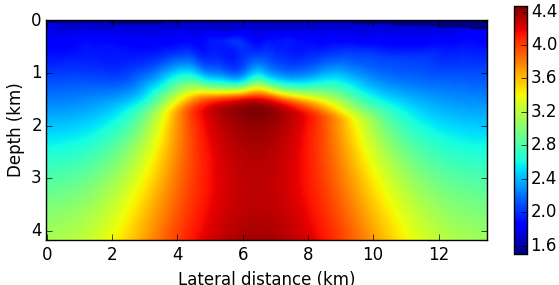}\label{fig:Mest}}
\end{center}
\caption{2D travel time tomography experiment, grid size $256\times128$. Velocities are given in $km/sec$.}
\label{fig:TTT}
\end{figure}

To fit the model to the data, we minimize \eqref{eq:inverseEik} using Gauss-Newton (we perform 10 iterations, where in each we apply 8 CG steps for the Gauss Newton direction problem). We use the general-propose inversion package \cite{jInv16}, which is freely available in \url{https://github.com/JuliaInv/jInv.jl}, together with our FM package mentioned earlier. For that, we first generate an initial slowness model $\bfm^{(0)} = \bfm_{ref}$, whose velocity model shown in Figure \ref{fig:Mref}. This corresponds to a velocity field with a constant gradient in the $y$ direction, similarly to the model in Figure \ref{fig:kappa2}. To bound $\bfm$ from above and from below throughout the minimization, we invert for an auxiliary variable $\bfm'$ and use the following scalar bounded bijective mapping that prevents $\bfm$ from being below $m_L$ or above $m_H$:
$$
m_{bound}(m') = \frac{m_H-m_L}{2}\cdot\tanh\left(\frac{2}{m_H-m_L}\cdot\left(m'-\frac{m_H+m_L}{2}\right)+1\right) + m_L.
$$
That is, instead of minimizing \eqref{eq:inverseEik} as is, we minimize
\begin{eqnarray}
\label{eq:inverseEikBound}
\min_{\bfm'}\phi(\bfm') = \min_{\bfm'}\left\{\sum_{i=1}^{n_s}{\|\bfP^{\top} \bftau^{i}(m_{bound}(\bfm')) - \bfd_{\obs}^{i}\|^2} + \alpha R(\bfm')\right\},
\end{eqnarray}
subject to the same constraints in \eqref{eq:eik_i}. For the regularization $R$ use a simple discrete central-differences Laplacian, and apply it for $\bfm'$; that is
$$R(\bfm') = \frac{1}{2}(\bfm'-\bfm'_{ref})^\top\Delta_h(\bfm'-\bfm'_{ref}),$$
where $\bfm'_{ref}$ is the model such that $m_{bound}(\bfm_{ref}') = \bfm_{ref}$. For $\Delta_h$ we use Neumann boundary conditions, since those lead to an effect of an automatic salt flooding, which is a popular way to treat salt bodies. We set the regularization parameter to be $\alpha=0.5$.
We note that other choices of $\bfm_{ref}$ and regularization terms may definitely be suitable here, but are beyond the scope of this paper. Figure \ref{fig:Mest} shows the result model of the Gauss Newton minimization, and Figure \ref{fig:data} presents the initial and final data and data residuals. In particular, Figure \ref{fig:resPred} shows that the final residual mostly contains the added Gaussian noise. Figure \ref{fig:his} shows that the misfit was indeed reduced throughout the iterations, until the reduction stalls and the misfit reflect the noise level.

\begin{figure}
\begin{center}
  \subfigure[Initial data using $\bfm_{ref}$.]{\includegraphics[width=0.48\linewidth]{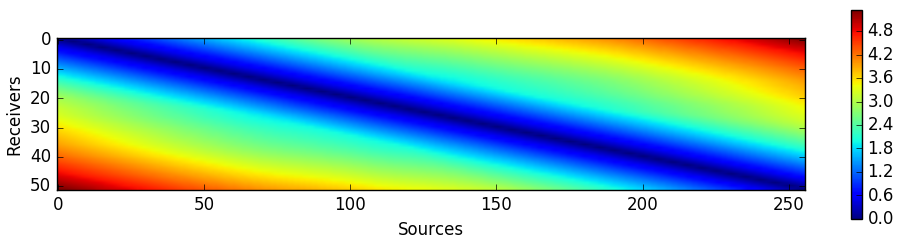}\label{fig:d0}}
  \subfigure[Final predicted data.]{\includegraphics[width=0.48\linewidth]{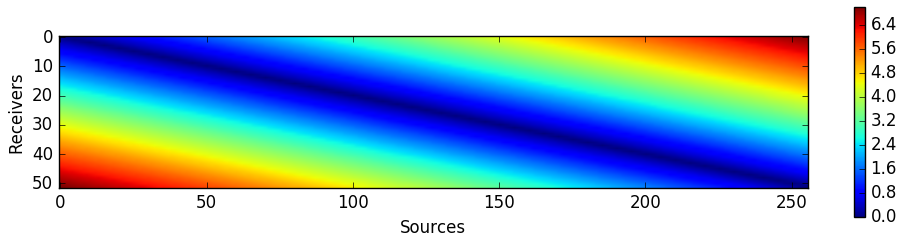}\label{fig:dpred}}
  \subfigure[Initial residual: $|\bfd_{\sf {_{ref}}} - \bfd_{\obs}|$]{\includegraphics[width=0.48\linewidth]{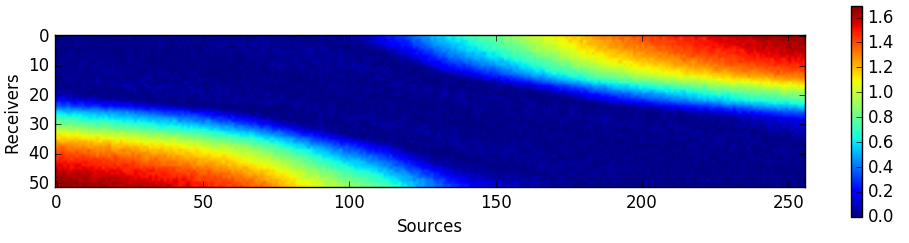}\label{fig:res0}}
  \subfigure[The final residual: $|\bfd_{\sf {_{pred}}} - \bfd_{\obs}|$]{\includegraphics[width=0.48\linewidth]{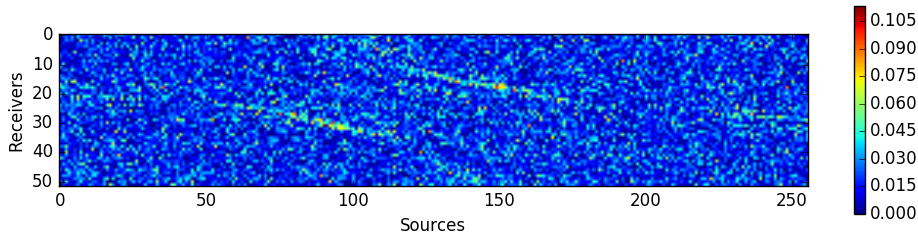}\label{fig:resPred}}
\end{center}
\caption{Initial and final data and residuals in the source-receiver domain.}
\label{fig:data}
\end{figure}

\begin{figure}
\begin{center}
  \includegraphics[width=0.32\linewidth]{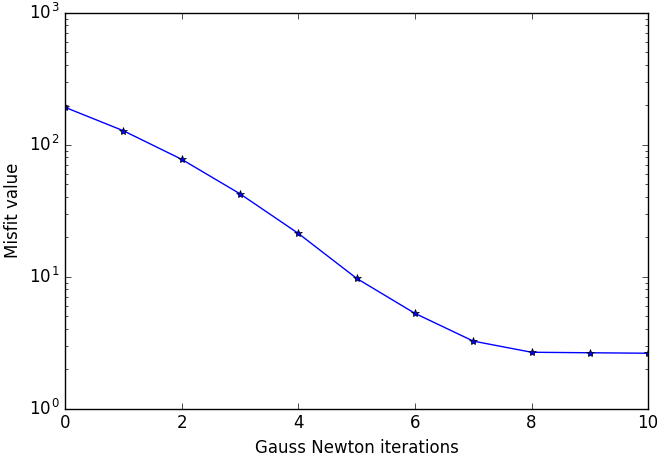}
\end{center}
\caption{Convergence history of the inversion. }
\label{fig:his}
\end{figure}

To demonstrate our algorithm in 3D, we use a 3D version of the same SEG/EAGE model, presented in Figure \ref{fig:Mtrue3D}, using a $256\times256\times128$ grid that represents a volume of $13.5km\times13.5km\times4.2km$. We choose 144 equally distanced sources locations on the open surface, located every 23 pixels on the top surface, and $256\times256$ receivers located on the top surface. We use the same parameters as in the 2D experiment (bound function, regularization, initial 3D model, added noise to the data, number of iterations etc.). Figure \ref{fig:Mest3D} shows the result of the inversion. Similarly to the 2D case, the top part of the model is recovered quite well, while a ``salt flooding'' effect is evident in the bottom part of the model. We performed the inversion using a machine with two Intel(R) Xeon(R) E5-2670 v3 processors with 128 GB of RAM. Using 24 cores, we applied the inversion in approximately 15 hours, and the highest memory footprint of the algorithm reached around 30GB.

\begin{figure}
\begin{center}
  \subfigure[True velocity model.]{\includegraphics[width=0.35\linewidth]{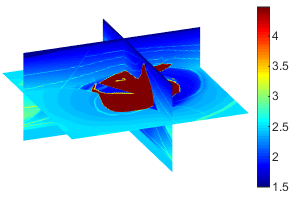}\label{fig:Mtrue3D}}
  \subfigure[Recovered model.]{\includegraphics[width=0.35\linewidth]{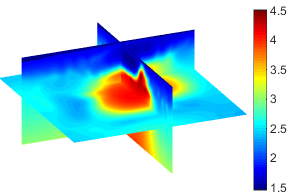}\label{fig:Mest3D}}
\end{center}
\caption{3D travel time tomography experiment, grid size $256\times256\times128$. Velocities are given in $km/sec$.}
\label{fig:TTT3D}
\end{figure}

\section{Conclusions}

In this paper we developed a Fast Marching algorithm for the factored eikonal equation, which in many cases yields a more accurate solution of the travel time than the original equation. Similarly to the original FM algorithm, our version solves the factored problem by exploiting the monotonicity of the solution along the characteristics. Our algorithm is capable of solving the problem using both first and second order schemes. The advantages of our algorithm are (1) its favorable guaranteed $O(n\log n)$ running time, and (2) the easily computed sensitivity matrices for solving the inverse (factored) eikonal equation.

\section{Acknowledgement}
The research leading to these results has received funding from the European Union's - Seventh Framework Programme (FP7/2007-2013) under grant agreement no 623212 - MC Multiscale Inversion.
\section*{References}
{
\bibliographystyle{siam}
\bibliography{../../Helmholtzbib}

\begin{thebibliography}{10}

\bibitem{aminzadeh19973}
{\sc F.~Aminzadeh, B.~Jean, and T.~Kunz}, {\em 3-D salt and overthrust models},
  Society of Exploration Geophysicists, 1997.

\bibitem{IMM2001-0841}
{\sc J.~A. B{\ae}rentzen}, {\em On the implementation of fast marching methods
  for {3D} lattices}, Tech. Report IMM-TR-2001-13, Informatics and Mathematical
  Modelling, Technical University of Denmark, {DTU}, Richard Petersens Plads,
  Building 321, {DK-}2800 Kgs. Lyngby, 2001.

\bibitem{benaichouche2015first}
{\sc A.~Benaichouche, M.~Noble, and A.~Gesret}, {\em First arrival traveltime
  tomography using the fast marching method and the adjoint state technique},
  in 77th EAGE Conference Proceedings., 2015.

\bibitem{Julia}
{\sc J.~Bezanzon, S.~Karpinski, V.~Shah, and A.~Edelman}, {\em Julia: A fast
  dynamic language for technical computing}, in Lang.{NEXT}, Apr. 2012.

\bibitem{crandall1983viscosity}
{\sc M.~G. Crandall and P.-L. Lions}, {\em Viscosity solutions of
  {Hamilton}-{Jacobi} equations}, Transactions of the American Mathematical
  Society, 277 (1983), pp.~1--42.

\bibitem{fomel2009fast}
{\sc S.~Fomel, S.~Luo, and H.~Zhao}, {\em Fast sweeping method for the factored
  eikonal equation}, Journal of Computational Physics, 228 (2009),
  pp.~6440--6455.

\bibitem{gremaud2006computational}
{\sc P.~A. Gremaud and C.~M. Kuster}, {\em Computational study of fast methods
  for the eikonal equation}, SIAM journal on scientific computing, 27 (2006),
  pp.~1803--1816.

\bibitem{haber2014computational}
{\sc E.~Haber}, {\em Computational Methods in Geophysical Electromagnetics},
  vol.~1, SIAM, 2014.

\bibitem{haber2011fast}
{\sc E.~Haber and S.~MacLachlan}, {\em A fast method for the solution of the
  {Helmholtz} equation}, Journal of Computational Physics, 230 (2011),
  pp.~4403--4418.

\bibitem{kao2004lax}
{\sc C.~Y. Kao, S.~Osher, and J.~Qian}, {\em Lax--friedrichs sweeping scheme
  for static {Hamilton}-{Jacobi} equations}, Journal of Computational Physics,
  196 (2004), pp.~367--391.

\bibitem{kimmel1998computing}
{\sc R.~Kimmel and J.~A. Sethian}, {\em Computing geodesic paths on manifolds},
  Proceedings of the National Academy of Sciences, 95 (1998), pp.~8431--8435.

\bibitem{leung2007eulerian}
{\sc S.~Leung, J.~Qian, and R.~Burridge}, {\em Eulerian {Gaussian} beams for
  high-frequency wave propagation}, Geophysics, 72 (2007), pp.~SM61--SM76.

\bibitem{leung2006adjoint}
{\sc S.~Leung, J.~Qian, et~al.}, {\em An adjoint state method for
  three-dimensional transmission traveltime tomography using first-arrivals},
  Communications in Mathematical Sciences, 4 (2006), pp.~249--266.

\bibitem{li2013first}
{\sc S.~Li, A.~Vladimirsky, and S.~Fomel}, {\em First-break traveltime
  tomography with the double-square-root eikonal equation}, Geophysics, 78
  (2013), pp.~U89--U101.

\bibitem{luo2011factored}
{\sc S.~Luo and J.~Qian}, {\em Factored singularities and high-order
  lax--friedrichs sweeping schemes for point-source traveltimes and
  amplitudes}, Journal of Computational Physics, 230 (2011), pp.~4742--4755.

\bibitem{luo2012fast}
\leavevmode\vrule height 2pt depth -1.6pt width 23pt, {\em Fast sweeping
  methods for factored anisotropic eikonal equations: multiplicative and
  additive factors}, Journal of Scientific Computing, 52 (2012), pp.~360--382.

\bibitem{luo2014fast}
{\sc S.~Luo, J.~Qian, and R.~Burridge}, {\em Fast {Huygens} sweeping methods
  for {Helmholtz} equations in inhomogeneous media in the high frequency
  regime}, Journal of Computational Physics, 270 (2014), pp.~378--401.

\bibitem{LouQianBurridge2014}
{\sc S.~Luo, J.~Qian, and R.~Burridge}, {\em High-order factorization based
  high-order hybrid fast sweeping methods for point-source eikonal equations},
  SIAM Journal on Numerical Analysis, 52 (2014), pp.~23--44.

\bibitem{luo2012higher}
{\sc S.~Luo, J.~Qian, and H.~Zhao}, {\em Higher-order schemes for 3d
  first-arrival traveltimes and amplitudes}, Geophysics, 77 (2012),
  pp.~T47--T56.

\bibitem{noble2014accurate}
{\sc M.~Noble, A.~Gesret, and N.~Belayouni}, {\em Accurate 3-d finite
  difference computation of traveltimes in strongly heterogeneous media},
  Geophysical Journal International, 199 (2014), pp.~1572--1585.

\bibitem{pica1997fast}
{\sc A.~Pica et~al.}, {\em Fast and accurate finite-difference solutions of the
  3d eikonal equation parametrized in celerity}, 67th Ann. Internat. Mtg, Soc.
  of Expl. Geophys,  (1997), pp.~1774--1777.

\bibitem{qian2002adaptive}
{\sc J.~Qian and W.~W. Symes}, {\em An adaptive finite-difference method for
  traveltimes and amplitudes}, Geophysics, 67 (2002), pp.~167--176.

\bibitem{qian2007fast}
{\sc J.~Qian, Y.-T. Zhang, and H.-K. Zhao}, {\em A fast sweeping method for
  static convex {Hamilton}-{Jacobi} equations}, Journal of Scientific
  Computing, 31 (2007), pp.~237--271.

\bibitem{rawlinson2004wave}
{\sc N.~Rawlinson and M.~Sambridge}, {\em Wave front evolution in strongly
  heterogeneous layered media using the fast marching method}, Geophysical
  Journal International, 156 (2004), pp.~631--647.

\bibitem{rouy1992viscosity}
{\sc E.~Rouy and A.~Tourin}, {\em A viscosity solutions approach to
  shape-from-shading}, SIAM Journal on Numerical Analysis, 29 (1992),
  pp.~867--884.

\bibitem{rudin1992nonlinear}
{\sc L.~I. Rudin, S.~Osher, and E.~Fatemi}, {\em Nonlinear total variation
  based noise removal algorithms}, Physica D: Nonlinear Phenomena, 60 (1992),
  pp.~259--268.

\bibitem{jInv16}
{\sc L.~Ruthotto, E.~Treister, and E.~Haber}, {\em {jInv} -- a flexible {Julia}
  package for {PDE} parameter estimation}, Submitted,  (2016).

\bibitem{saragiotis2013automatic}
{\sc C.~Saragiotis, T.~Alkhalifah, and S.~Fomel}, {\em Automatic traveltime
  picking using instantaneous traveltime}, Geophysics, 78 (2013), pp.~T53--T58.

\bibitem{sei1994gradient}
{\sc A.~Sei, W.~W. Symes, et~al.}, {\em Gradient calculation of the traveltime
  cost function without ray tracing}, in 65th Ann. Internat. Mtg., Soc. Expl.
  Geophys., Expanded Abstracts, 1994, pp.~1351--1354.

\bibitem{sethian1996fast}
{\sc J.~A. Sethian}, {\em A fast marching level set method for monotonically
  advancing fronts}, Proceedings of the National Academy of Sciences, 93
  (1996), pp.~1591--1595.

\bibitem{sethian1999fast}
\leavevmode\vrule height 2pt depth -1.6pt width 23pt, {\em Fast marching
  methods}, SIAM review, 41 (1999), pp.~199--235.

\bibitem{somersalo2004statistical}
{\sc E.~Somersalo and J.~Kaipio}, {\em Statistical and computational inverse
  problems}, Applied Mathematical Sciences, 160 (2004).

\bibitem{spira2004efficient}
{\sc A.~Spira and R.~Kimmel}, {\em An efficient solution to the eikonal
  equation on parametric manifolds}, Interfaces and Free Boundaries, 6 (2004),
  pp.~315--328.

\bibitem{taillandier2009first}
{\sc C.~Taillandier, M.~Noble, H.~Chauris, and H.~Calandra}, {\em First-arrival
  traveltime tomography based on the adjoint-state method}, Geophysics, 74
  (2009), pp.~WCB1--WCB10.

\bibitem{tsai2003fast}
{\sc Y.-H.~R. Tsai, L.-T. Cheng, S.~Osher, and H.-K. Zhao}, {\em Fast sweeping
  algorithms for a class of {Hamilton}-{Jacobi} equations}, SIAM journal on
  numerical analysis, 41 (2003), pp.~673--694.

\bibitem{tsitsiklis1995efficient}
{\sc J.~N. Tsitsiklis}, {\em Efficient algorithms for globally optimal
  trajectories}, Automatic Control, IEEE Transactions on, 40 (1995),
  pp.~1528--1538.

\bibitem{vogel2002computational}
{\sc C.~R. Vogel}, {\em Computational methods for inverse problems}, vol.~23,
  SIAM, Philadelphia, 2002.

\bibitem{zhao2005fast}
{\sc H.~Zhao}, {\em A fast sweeping method for eikonal equations}, Mathematics
  of computation, 74 (2005), pp.~603--627.

\end{thebibliography}
}
\end{document}